\documentclass{IEEEtran}
\usepackage{cite}
\usepackage{amsmath,amssymb,amsfonts}
\usepackage{algorithmic}
\usepackage{graphicx}
\usepackage{textcomp}
\usepackage{hyperref, comment}
\usepackage{textcomp}
\usepackage{xcolor}
\usepackage{dsfont}

\newtheorem{lemma}{Lemma}
\newtheorem{theorem}{Theorem}
\newtheorem{corollary}{Corollary}
\newtheorem{remark}{Remark}
\newtheorem{definition}{Definition}
\newtheorem{example}{Example}

\newtheorem{problem}{Problem}
\newtheorem{condition}{Condition}

\usepackage{dblfloatfix}
\def\BibTeX{{\rm B\kern-.05em{\sc i\kern-.025em b}\kern-.08em
    T\kern-.1667em\lower.7ex\hbox{E}\kern-.125emX}}
\begin{document}
\title{Robust Cooperative Output Regulation of Discrete-Time  Heterogeneous Multi-Agent Systems}

\author{K\"{u}r\c{s}ad Metehan G\"{u}l and Selahattin Burak Sars\i lmaz
\thanks{K\"{u}r\c{s}ad Metehan G\"{u}l and Selahattin Burak Sars\i lmaz are with the Department of Electrical and Computer Engineering, Utah State University, Logan, UT 84322, USA (emails: 
        {\tt kursad.gul@usu.edu, burak.sarsilmaz@usu.edu}).}%
}

\maketitle
\begin{abstract}


This article considers robust cooperative output regulation of discrete-time uncertain heterogeneous (in dimension) multi-agent systems (MASs). We show that the solvability of this problem with an internal model-based distributed control law reduces to the existence of a structured control gain that makes the nominal closed-loop system matrix of the MAS Schur. Accordingly, this article focuses on global and agent-wise local sufficient conditions for the existence and design of such a structured control gain. Based on a structured Lyapunov inequality, we present a convexification that yields a linear matrix inequality (LMI), whose feasibility is a global sufficient condition for the existence and design.  Considering the individual nominal dynamics of each agent, the existence is also ensured if each agent solves a structure-free control problem. Its convexification yields LMIs that allow each agent to separately design its structure-free control gain.  
Lastly, we study the relationships between the sets of control gains emerging from both global and local perspectives.

\end{abstract}

\begin{IEEEkeywords}
Cooperative control,
distributed control, 
structured control,
output regulation,
linear matrix inequality (LMI), 
Lyapunov methods,
discrete-time, 
multi-agent system.
\end{IEEEkeywords}

\section{Introduction}
\subsection{Literature Review}

The extension of the robust output regulation problem \cite{1101137, FRANCIS1976457, huang2004nonlinearbook} to multi-agent systems (MASs), known as the robust cooperative output regulation problem (RCORP), provides a unifying framework to formulate and solve cooperative control problems, including leader-following consensus \cite{cai2022cooperative}, formation tracking \cite{scitechpaper}, and containment \cite{10438523}. For continuous-time homogeneous (in dimension) \cite{6361274,cai2022cooperative} and heterogeneous (in dimension) \cite{koru2020cooperative,kawamura2020distributed,SarsilmazIJC, koru2022regional}  linear MASs over general directed graphs, RCORP has been well studied by leveraging the distributed internal model approach. The synthesis of control gains for internal model-based distributed control of heterogeneous MASs can be grouped into two: global design \cite{koru2020cooperative} and agent-wise local design \cite{ kawamura2020distributed, SarsilmazIJC, koru2022regional}.  
A global design method treats the closed-loop MAS as a single system, imposing a particular structure on the overall control gain, especially zeros for numerous entries. 
Such structured control synthesis problems are believed to be NP-hard \cite{nphardness}. On the other hand, an agent-wise local design method focuses on the individual agent dynamics, allowing each agent to design its structure-free control gain independent of other agents.

One computationally efficient approach to address structured control problems is to restrict solutions of the Lyapunov inequalities to specific structures \cite{Stipanovic01012001,SILJAK2005169}. The existence of such structured solutions has been studied for a few  structures \cite{matrixdiagonalstabilitybook, ALEKSANDROV201638, 8876610}. Specifically, in \cite{ALEKSANDROV201638}, a necessary and sufficient condition for the existence of a diagonal solution to the algebraic Riccati equation is derived, whereas sufficient conditions for the existence of a block-diagonal solution to  Lyapunov and $\mathcal{H}_{\infty}$ Riccati inequalities are presented in \cite{8876610}. Though in general it remains an open question which is less conservative,  another computationally efficient way is to pass the structure-imposing to the new matrices through a characterization of the Lyapunov inequalities based on the projection lemma \cite{8747426}.

Compared to its continuous-time counterpart, RCORP for discrete-time linear MASs is seldom addressed. For homogeneous (in dimension) MASs, the nominal closed-loop system matrix is similar to a block upper or lower triangular matrix whose $i$th diagonal block consists of nominal system parameters of agent $i$ and an eigenvalue induced by the graph matrices \cite{7004551,homogenpcopyyaminhuang, Cai2020}. However, the nominal closed-loop system matrix does not enjoy this useful property when the MAS is heterogeneous in dimension. In this case, to establish agent-wise local design methods,  \cite{8360489} derives small-gain solvability conditions, and  \cite{7018013} employs a distributed observer that provides each agent with an estimate of the exogenous signal.

\subsection{Contribution}

This article considers the RCORP of discrete-time uncertain heterogeneous (in dimension)  linear time-invariant MASs over  general time-invariant directed graphs with an internal model-based distributed dynamic state feedback control law, which does not exchange controller states of neighboring agents. 
This control law solves the RCORP under standard conditions if there is a structured control gain, with a particular structure, that makes the nominal closed-loop system matrix Schur (see Theorem \ref{thm:keytheorem} and Remark  \ref{rmk:sufficient_necessary}). Though the stabilizability of the pair of interest, established in Lemma \ref{lmm:augsysstab}, is necessary for the existence of such a control gain, it is not sufficient in general (see Example \ref{ex_structured_stabilizability}).
Accordingly, this article presents global and agent-wise local sufficient conditions for the existence and design of the structured control gain. 
Based on a structured Lyapunov inequality, the feasibility of a linear matrix inequality (LMI) provides a global sufficient condition for the existence and design (see Lemma \ref{lmm:globaldesign}, Remark \ref{remark_stabilizability}, and Theorem \ref{thm:globaltheorem}). 
The existence is also ensured if each agent solves a structure-free control problem (see Theorem \ref{thm:agentwiseLocal_Cyclic}). Its convexification yields LMIs for each agent's control gain synthesis (see  Remark \ref{remark_convex_nonconvex_local} and Corollary \ref{crl:ControlSynthesisDiszero}). Therefore, the proposed global and agent-wise local design methods can be
conducted in polynomial time. 

In terms of scalability and conservatism, we briefly compare both design methods. The global design method simultaneously synthesizes each agent's control gain, whereas the agent-wise local design method does so separately. Thus, the agent-wise local design method is much more scalable.  Section 
\ref{sec_set_inclusions_global_local} investigates the relationships between the sets of control gains emerging from both perspectives. One of the results is that
the global design method is less conservative than the agent-wise local design method (see Corollary \ref{crl:setinclusion1} and Example \ref{ex:KsdoesnotimplyKlc}).

This article differs from the relevant studies \cite{8360489,7018013} in the following key aspects. The proof of the solvability result in \cite{8360489} enforces stability of the nominal local dynamics of each agent, disregarding the case in which the closed-loop system matrix of the MAS is Schur with unstable nominal local dynamics. In contrast, the solvability result in this article does not disregard such cases (see Theorem \ref{thm:keytheorem} and Example \ref{ex:KGdoesnotimplyKa}), and hence, it is more general. While \cite{8360489} provides an agent-wise local perspective on the existence and design of the structured control gain, this article establishes both global and agent-wise local sufficient conditions, with the local conditions implying the global one. In particular, the proposed agent-wise local sufficient conditions construct a Lyapunov inequality for the nominal closed-loop system matrix of the MAS, which is a useful tool for handling switching graphs \cite{10540292}, event-triggered control \cite{QIAN2024119}, and bounded parametric uncertainties within the context of the continuous-time RCORP.  Lastly, the distributed control law in  \cite{7018013} necessitates a communication network for the MAS since agents exchange their controller states, whereas the considered distributed control law in this article can still be applicable in the lack of a communication network since agents use their relative outputs with respect to their neighbors, which can be measured by their onboard sensors.

\subsection{Notation}
We write  $\mathbf{1}_n$ for the $n \times 1$ vector of all ones,  $I_n$ for the $n \times n$ identity matrix, $0_{n \times m}$ or $0$ for the $n \times m$ zero matrix, 
$\otimes$  for the Kronecker product, and $\mathrm{diag}(X_i)$  for the block-diagonal matrix whose entries on the main diagonal are $X_i$s for all $i$ in the corresponding index set. 
The vector formed by stacking columns of a matrix $X \in \mathbb{R}^{n \times m}$ is denoted by $\mathrm{vec}(X)$.  
The spectrum and determinant of a square matrix $X \in \mathbb{R}^{n \times n}$ are denoted by $\mathrm{spec}(X)$ and $\mathrm{det}(X)$, respectively. 
Let $\mathbb{S}^n_{++}$ denote the set of $n \times n$ symmetric positive definite matrices. 
For symmetric matrices $X$ and $Y$,  $X  \succ Y $ $(\succeq, \prec, \preceq)$ means $X-Y$ is positive definite (positive semidefinite, negative definite, negative semidefinite). We write  $\lambda_{\mathrm{min}}(\cdot)$ and  $\lambda_{\mathrm{max}}(\cdot)$ for the minimum and maximum eigenvalue of a symmetric matrix.
Lastly, the cardinality of a set $S$ is denoted by $|S|$.

\section{Problem Formulation}
We consider a discrete-time  uncertain heterogeneous MAS in the form 
\begin{align}\label{eq:MAS} \nonumber
x_i^{+} &= \Bar{A}_ix_i + \Bar{B}_iu_i + E_iv \\
e_i &= y_i - Fv = \Bar{C}_ix_i + \Bar{D}_iu_i   - Fv  , \quad i = 1, \ldots, N 
\end{align}
where $x_i \in \mathbb{R}^{n_i}$ is the state, $u_i \in \mathbb{R}^{m_i}$ is the control input, and $e_i \in \mathbb{R}^{p}$ is the tracking error of subsystem $i$. Moreover, $\Bar{A}_i = A_i + \delta A_i$, $\Bar{B}_i = B_i + \delta B_i$, $\Bar{C}_i = C_i + \delta C_i$, and $\Bar{D}_i = D_i + \delta D_i$ are uncertain matrices with $A_i, B_i, C_i$, and $D_i$ being their nominal parts. 
The results of this article hold for any  matrices $E_i$ and $F$, which do not need to be known.
Also,
$v \in \mathbb{R}^{n_0}$ is the exogenous signal generated by the  exosystem 
\begin{equation}\label{eq:exosystem}
    v^{+} = A_0v.
\end{equation}
This autonomous  system yields the reference $Fv$ to be tracked and the disturbance   $E_iv$ to be rejected by  
subsystem $i$.

In the context of the RCORP, the subsystems of \eqref{eq:MAS}, considered the followers, and the exosystem \eqref{eq:exosystem}, considered the leader, constitute a leader-follower MAS of $N+1$ agents. To model the information exchange between $N$ followers, we use a time-invariant directed graph ${\mathcal{G}} = (\mathcal{N}, \mathcal{E})$ without self-loops, where  $\mathcal{N} = \left \lbrace 1, \ldots, N \right \rbrace$ is the node set and $\mathcal{E}  \subseteq \left \lbrace (i,j)    ~|~ i, j \in \mathcal{N}, i \neq j  \right \rbrace $ is the edge set. Here, node $i \in \mathcal{N}$ corresponds to follower $i$, and for each $j,i \in \mathcal{N}$, we put $(j,i) \in \mathcal{E}$ if, and only if, follower $i$ has access to the information of follower $j$. The entries of the adjacency matrix $\mathcal{A} = [a_{ij}] \in \mathbb{R}^{N\times N}$ of the graph $\mathcal{G}$ are determined by the rule that for each $j, i \in \mathcal{N}$, $a_{ij} > 0 $  if $(j, i) \in \mathcal{E}$ and $a_{ij} = 0$ otherwise. 
The in-degree $d_i$ of follower $i$ is defined as  $d_{i} = \sum_{j=1}^{N} a_{ij}$. Furthermore, the leader is included in the information exchange model by augmenting the graph ${\mathcal{G}}$ as follows: 
Let $\Bar{\mathcal{G}} = (\Bar{\mathcal{N}}, \Bar{\mathcal{E}}) $ be a  directed graph with $\Bar{\mathcal{N}} = \mathcal{N} \cup \{0\}$, $\Bar{\mathcal{E}} = \mathcal{E} \cup \mathcal{E}'$, where $\mathcal{E}' \subsetneq \{(0,i) ~|~ i \in \mathcal{N}\}$. Here, node $0$ corresponds to the leader, and for any $i \in \mathcal{N}$, we put  $(0,i) \in \mathcal{E}'$ if, and only if, follower $i$ has access to the leader's information. For any $i\in \mathcal{N}$, the pinning gain $g_{i} > 0$ if $(0,i) \in \mathcal{E}'$ and $g_{i} = 0$ otherwise.
A control law that relies on the information exchange modeled by the augmented directed graph $\Bar{\mathcal{G}}$ is called a \textit{distributed control law}.

The tracking error $e_i$ is available to a proper subset of followers.
Each  follower also has access to the relative output  $y_i - y_j = e_i - e_j$ for some $j \in \mathcal{N}$, depending on $\mathcal{G}$. Based on the available information, a local virtual tracking error for each follower is defined as
\begin{align}\label{eq:virtualerror}
    e_{\mathrm{v}i} = \frac{1}{d_i+g_i}\left(\sum_{j=1}^{N} a_{ij}(e_i-e_j) + g_ie_i \right)
\end{align}
where we assume $d_i + g_i > 0$ for all $i \in \mathcal{N}$ in this article\footnote{
Condition \ref{ass:spanningtree} is sufficient for this assumption to hold, but it is not necessary in general. It becomes necessary when the directed graph $\mathcal{G}$ is acyclic.}. 
Consider the distributed dynamic state feedback control law
\begin{align}\label{eq:controllaw} \nonumber
    z_i^{+} &= G_{1i}z_i + G_{2i}e_{\mathrm{v}i}   \\
    u_i &= K_{1i}x_i + K_{2i}z_i, \quad i = 1, \ldots, N
\end{align}
where $z_i \in \mathbb{R}^{n_z}$ is the controller state of follower $i$. The matrices $G_{1i} \in \mathbb{R}^{n_z \times n_z}$,  $G_{2i} \in \mathbb{R}^{n_z \times p}$, $K_{1i}\in \mathbb{R}^{m_i \times n_i}$, and $K_{2i} \in \mathbb{R}^{m_i \times n_z}$ are  control gains to be designed.  

For convenience, the MAS uncertainty is represented with  
\begin{align}
    \delta = \begin{bmatrix}
        [\mathrm{vec}(\delta A_1)^{\mathrm{T}},\ldots,\mathrm{vec}(\delta A_N)^{\mathrm{T}}]^{\mathrm{T}} \\ [\mathrm{vec}(\delta B_1)^{\mathrm{T}},\ldots,\mathrm{vec}(\delta B_N)^{\mathrm{T}}]^{\mathrm{T}} \\ [\mathrm{vec}(\delta C_1)^{\mathrm{T}},\ldots,\mathrm{vec}(\delta C_N)^{\mathrm{T}}]^{\mathrm{T}} \\ [\mathrm{vec}(\delta D_1)^{\mathrm{T}},\ldots,\mathrm{vec}(\delta D_N)^{\mathrm{T}}]^{\mathrm{T}}
    \end{bmatrix}. 
    \nonumber
\end{align}
The \textit{closed-loop system} consists of \eqref{eq:MAS}, \eqref{eq:virtualerror}, and \eqref{eq:controllaw}. It is referred to as the $\textit{nominal closed-loop system}$ if $\delta = 0$. 
We now state the RCORP for discrete-time uncertain heterogeneous MASs. Its definitions for continuous-time and discrete-time MASs with identical nominal follower dynamics can be found in Definition 1 of  \cite{6361274} and Problem 1 of \cite{7004551}, respectively. 

\begin{problem}[RCORP]\label{prb:mainproblem}
    Given the MAS composed of \eqref{eq:MAS} and \eqref{eq:exosystem}, and the augmented directed graph $\Bar{\mathcal{G}}$, find a distributed control law of the form \eqref{eq:controllaw} such that
    \begin{enumerate}
        \item [(i)] The nominal closed-loop system matrix is Schur\footnote{
   That is, all its eigenvalues have modulus less than $1$.
       };
     \item[(ii)] For $i \in \mathcal{N}$,  for any $x_i(0)$, $z_i(0)$, $v(0)$, $E_i$, and $F$, and for any $\delta \in \Delta$, where
     $\Delta$ is a set of $\delta$ such that the closed-loop system matrix is Schur,      $ \lim_{t\to\infty} e_i(t) = 0$. 

    \end{enumerate}
\end{problem}

\begin{remark}\label{rmk:ExistenceofDelta}
It is well known that if the nominal closed-loop system matrix is Schur, then the set $\Delta$ contains $0$ in its interior.
Thus, only the nominal closed-loop system matrix is considered in property (i) of Problem \ref{prb:mainproblem}.  
\end{remark}

\section{Solvability of the RCORP}

For the solvability of Problem \ref{prb:mainproblem}, we will refer to the following conditions. 

\begin{condition}\label{ass:spanningtree}
The augmented directed graph $\Bar{\mathcal{G}}$ contains a directed spanning tree.
\end{condition}

\begin{condition}\label{ass:A0antiSchur}
For any $\lambda \in \mathrm{spec}(A_0)$, $|\lambda| \geq 1$.
\end{condition}

\begin{condition}\label{ass:pcopy}
For any $i \in \mathcal{N}$, the pair $(G_{1i},G_{2i})$ incorporates a $p$-copy internal model of $A_0$\footnote{The pair $(G_{1i}, G_{2i})$  takes the following form: $G_{1i} = \mathrm{diag}(\alpha_{\ell i})$ and $G_{2i} = \mathrm{diag}(\beta_{\ell i})$, where  for $\ell = 1, \ldots, p$,
$\alpha_{\ell i}$ is a square matrix and $\beta_{\ell i}$ is a column vector such that the minimal polynomial of $ A_0$ is equal to
the characteristic polynomial of $\alpha_{\ell i}$ and the pair $(\alpha_{\ell i}, \beta_{\ell i})$ is reachable.}.
\end{condition}

\begin{condition}\label{ass:transzero}
For any $i \in \mathcal{N}$, 
\begin{align}
    \mathrm{rank}\begin{bmatrix}
        A_i - \lambda I_{n_i} & B_i \\ C_i & D_i
    \end{bmatrix} = n_i + p \quad \forall{\lambda} \in \mathrm{spec}(A_0).
    \nonumber
\end{align}
\end{condition}

\begin{condition}\label{ass:AiBistab}
For any $i \in \mathcal{N}$, the pair $(A_{i},B_{i})$ is stabilizable.
\end{condition}

Conditions \ref{ass:A0antiSchur}--\ref{ass:AiBistab} are standard in linear  output regulation theory (e.g., see Chapter 1 in \cite{huang2004nonlinearbook}), and so is Condition \ref{ass:spanningtree} while tackling the RCORP (e.g., see \cite{6361274, koru2020cooperative}). 


Considering \eqref{eq:MAS}--\eqref{eq:controllaw}, the closed-loop system and the exosystem can be compactly written as  
\begin{align}\label{eq:composite_system}
    x_\mathrm{g}^{+} &= \Bar{A}_\mathrm{g}x_\mathrm{g} + B_\mathrm{g}v_\mathrm{a} \nonumber \\
        {v}_\mathrm{a}^{+} &= A_{0\mathrm{a}}v_\mathrm{a} \nonumber \\
    e &= \Bar{C}_\mathrm{g}x_\mathrm{g} + D_\mathrm{g}v_\mathrm{a}  
\end{align}
with  $x_\mathrm{g} = [x^\mathrm{T}, z^\mathrm{T}]^\mathrm{T}$, $x = [x_1^\mathrm{T},\ldots,x_N^\mathrm{T}]^\mathrm{T}$, $z = [z_1^\mathrm{T},\ldots,z_N^\mathrm{T}]^\mathrm{T}$, $v_\mathrm{a} = \mathbf{1}_N \otimes v$, $e = [e_1^\mathrm{T},\ldots,e_N^\mathrm{T}]^\mathrm{T}$, and the following matrices
\begin{align} \label{eq:gainK}
\Bar{A}_\mathrm{g} &= \Bar{A} + \Bar{B}K, \  K = \begin{bmatrix}
        \mathrm{diag}(K_{1i}) & \mathrm{diag}(K_{2i})
    \end{bmatrix} \nonumber \\  
\Bar{A} &= \begin{bmatrix}
        \mathrm{diag}(\Bar{A}_i) & 0 \\
        \mathrm{diag}(G_{2i})\mathcal{W}\mathrm{diag}(\Bar{C}_i) & \mathrm{diag}(G_{1i})
    \end{bmatrix} \nonumber 
    \\
      \Bar{B} &= \begin{bmatrix}
        \mathrm{diag}(\Bar{B}_{i}) \\ \mathrm{diag}(G_{2i})\mathcal{W}\mathrm{diag}(\Bar{D}_i)
    \end{bmatrix}, \ B_\mathrm{g} = \begin{bmatrix}
        \mathrm{diag}(E_{i}) \\ -\mathrm{diag}(G_{2i})\mathcal{W}F_\mathrm{a}
    \end{bmatrix} \nonumber \\ 
    \Bar{C}_{\mathrm{g}} &= \begin{bmatrix}
        \mathrm{diag}(\Bar{C}_{i}+{\Bar{D}_i}K_{1i}) & \mathrm{diag}({\Bar{D}_i}K_{2i})
    \end{bmatrix} \nonumber \\
       D_\mathrm{g} &= -F_\mathrm{a}, \ F_\mathrm{a} = I_N \otimes F,  \ A_{0\mathrm{a}} = I_N \otimes A_0  \nonumber \\    \mathcal{F} &= \mathrm{diag}((d_i+g_i)^{-1}), \ \mathcal{W} = (I_N-\mathcal{F}\mathcal{A})\otimes I_p. 
\end{align} 
The nominal closed-loop system and output matrices are denoted by ${A}_\mathrm{g}$ and ${C}_\mathrm{g}$, respectively. 
The matrix  ${A}_\mathrm{g}$ can be written as ${A}_\mathrm{g}  = A + B K$, where $A$ and $B$ denote the nominal (i.e., $\delta = 0$) parts of $\Bar{A}$  and $\Bar{B}$, respectively.

We now tailor Lemma 7.3 in \cite{cai2022cooperative} for the system \eqref{eq:composite_system}.

\begin{lemma}\label{lmm:fundamentallemma}
If $A_\mathrm{g}$ is Schur and for any $\mathrm{diag}(E_i)$, $F$, and for any $\delta \in \Delta$, where $\Delta = \{ \delta  ~|~   \Bar{A}_{\mathrm{g}} \text{ is} \ \text{Schur}\}$, 
    there exists a matrix 
    $X_\mathrm{g} $ such that
    \begin{align}\label{eq:clsdloopreg} \nonumber
            X_\mathrm{g}A_{0\mathrm{a}} &= \Bar{A}_\mathrm{g}X_\mathrm{g} + B_\mathrm{g} \\
            0 &= \Bar{C}_\mathrm{g}X_\mathrm{g} + D_\mathrm{g}  
        \end{align} 
then Problem \ref{prb:mainproblem} is solved. 
\end{lemma}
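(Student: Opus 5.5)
Property (i) of Problem \ref{prb:mainproblem} is exactly the hypothesis that $A_\mathrm{g}$ is Schur, and this is the same nominal closed-loop system matrix with $\delta = 0$. It remains to verify property (ii). The set $\Delta$ in the lemma coincides with the set in Problem \ref{prb:mainproblem}(ii). By Remark \ref{rmk:ExistenceofDelta}, it contains a neighborhood of $0$, so it is nonempty.

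To prove (ii), fix arbitrary $\mathrm{diag}(E_i)$, $F$, and $\delta \in \Delta$, and let $X_\mathrm{g}$ solve \eqref{eq:clsdloopreg}. Introduce the error coordinate $\tilde{x}_\mathrm{g} = x_\mathrm{g} - X_\mathrm{g} v_\mathrm{a}$ for the composite system \eqref{eq:composite_system}. Using $v_\mathrm{a}^+ = A_{0\mathrm{a}} v_\mathrm{a}$ and the first equation of \eqref{eq:clsdloopreg},
\begin{align}
\tilde{x}_\mathrm{g}^+ = \Bar{A}_\mathrm{g}x_\mathrm{g} + B_\mathrm{g}v_\mathrm{a} - X_\mathrm{g}A_{0\mathrm{a}}v_\mathrm{a} = \Bar{A}_\mathrm{g}\tilde{x}_\mathrm{g}. \nonumber
\end{align}
Using the second equation of \eqref{eq:clsdloopreg},
\begin{align}
e = \Bar{C}_\mathrm{g}\tilde{x}_\mathrm{g} + (\Bar{C}_\mathrm{g}X_\mathrm{g} + D_\mathrm{g})v_\mathrm{a} = \Bar{C}_\mathrm{g}\tilde{x}_\mathrm{g}. \nonumber
\end{align}

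Since $\delta \in \Delta$, the matrix $\Bar{A}_\mathrm{g}$ is Schur. Hence $\tilde{x}_\mathrm{g}(t) = \Bar{A}_\mathrm{g}^t \tilde{x}_\mathrm{g}(0) \to 0$ as $t \to \infty$, for every initial condition $x_i(0)$, $z_i(0)$, and $v(0)$; these enter only through $\tilde{x}_\mathrm{g}(0) = x_\mathrm{g}(0) - X_\mathrm{g}(\mathbf{1}_N \otimes v(0))$. Therefore $e(t) = \Bar{C}_\mathrm{g}\tilde{x}_\mathrm{g}(t) \to 0$, and consequently each block $e_i(t) \to 0$.

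The argument is routine. The only point needing care is that the conclusion holds uniformly over the quantifiers in (ii). This is so because the hypothesis supplies $X_\mathrm{g}$ for every choice of $E_i$, $F$, and $\delta \in \Delta$, while the initial states are arbitrary and enter only through $\tilde{x}_\mathrm{g}(0)$. No stability property of $A_0$ is required, because the exosystem state is entirely absorbed by the change of coordinates. This mirrors the proof of Lemma 7.3 in \cite{cai2022cooperative}, specialized to discrete time.
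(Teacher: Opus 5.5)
Your proof is correct and takes essentially the same approach as the paper. The paper gives no separate proof and only cites Lemma 7.3 of \cite{cai2022cooperative}; that lemma rests on the same change of coordinates $\tilde{x}_\mathrm{g} = x_\mathrm{g} - X_\mathrm{g}v_\mathrm{a}$, which you carry out here in discrete time, with property (i) read off directly from the hypothesis that $A_\mathrm{g}$ is Schur.
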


The linear matrix equations (LMEs) \eqref{eq:clsdloopreg}, which depend on the unknown matrices $\mathrm{diag}(E_i)$, $F$, and the uncertain parameter $\delta$, sets an algebraic goal of the distributed control law  \eqref{eq:controllaw}. To meet this goal, we use the following two lemmas. 

\begin{lemma}[Lemma 4.1 and Remark 4.1 in \cite{SarsilmazIJC}]\label{lmm:Wnonsingular}
    Under Condition \ref{ass:spanningtree}, $\mathcal{W}$ is nonsingular.
\end{lemma}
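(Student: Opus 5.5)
The plan is to show that $\mathcal{W}=(I_N-\mathcal{F}\mathcal{A})\otimes I_p$ is nonsingular by showing that $I_N-\mathcal{F}\mathcal{A}$ is nonsingular. The Kronecker product with $I_p$ only repeats each eigenvalue $p$ times, so this suffices. Since $\mathcal{F}=\mathrm{diag}((d_i+g_i)^{-1})$ is nonsingular, we can write
\begin{align}
I_N-\mathcal{F}\mathcal{A}=\mathcal{F}(\mathcal{D}+\mathcal{G}_0-\mathcal{A}),
\nonumber
\end{align}
where $\mathcal{D}=\mathrm{diag}(d_i)$ and $\mathcal{G}_0=\mathrm{diag}(g_i)$. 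The question therefore reduces to nonsingularity of $H=\mathcal{L}+\mathcal{G}_0$, with $\mathcal{L}=\mathcal{D}-\mathcal{A}$ the Laplacian of $\mathcal{G}$.

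We then prove $H$ is nonsingular under Condition \ref{ass:spanningtree}. $H$ is a Z-matrix with nonnegative diagonal and is weakly row diagonally dominant: in row $i$ the diagonal entry $d_i+g_i$ exceeds the off-diagonal row sum $d_i$ by $g_i\ge 0$. Rows with $g_i>0$ are strictly dominant. Because $\Bar{\mathcal{G}}$ contains a directed spanning tree, which must be rooted at node $0$ (the leader has no incoming edges), every follower $i$ has a directed path from some pinned node $j$ with $g_j>0$. Along the edges of this path, each row $i$ has a nonzero entry $a_{ik}$ pointing back toward the pinned node.

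The key step is to rule out a null vector. Suppose $Hx=0$ with $x\neq 0$, and let $i$ maximize $|x_i|$. Then
\begin{align}
(d_i+g_i)|x_i|\le \sum_j a_{ij}|x_j|\le d_i|x_i|.
\nonumber
\end{align}
This forces $g_i=0$ and $|x_j|=|x_i|$ for every in-neighbor $j$ of $i$. Repeating the argument, the maximum modulus propagates backward along in-neighbors. Following the path back to a pinned node $j$ gives $g_j|x_j|\le 0$ with $|x_j|=\max_k|x_k|>0$, a contradiction. Hence $H$, and therefore $\mathcal{W}$, is nonsingular.

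The main obstacle is making the propagation argument rigorous. One must check that the set of maximizing indices is closed under taking in-neighbors, and that the spanning-tree condition guarantees this closed set reaches a pinned node. An equivalent route is to note that $\mathcal{F}\mathcal{A}$ is substochastic with spectral radius less than one, via irreducibly-diagonally-dominant or M-matrix theory.
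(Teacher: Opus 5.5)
Your proof is correct and complete. The paper itself gives no argument: it imports the result from \cite{SarsilmazIJC}. So your write-up is a self-contained replacement for a citation, not a different decomposition of an argument the paper carries out.

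The pieces of your proof check out.
\begin{itemize}
\item The factorization $I_N-\mathcal{F}\mathcal{A}=\mathcal{F}(\mathcal{L}+\mathcal{G}_0)$ is valid because $d_i+g_i>0$. This also follows from Condition~\ref{ass:spanningtree}, since every follower has a parent in the tree.
\item Any spanning tree of $\Bar{\mathcal{G}}$ must be rooted at node $0$, as you say.
\item The maximum-modulus step does what you need. The set $S$ of indices attaining $\max_k|x_k|$ has $g_k=0$ for every $k\in S$, and it is closed under taking in-neighbors. Walking backward along the tree path $0\to j_1\to\cdots\to i$ from any $i\in S$ therefore lands on $j_1\in S$ with $g_{j_1}>0$, which is a contradiction.
\end{itemize}

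The standard route behind the cited result is the known fact that $\mathcal{L}+\mathcal{G}_0$ is a nonsingular M-matrix when the leader roots a spanning tree. That route buys brevity and extra spectral information that is often reused: all eigenvalues of $\mathcal{L}+\mathcal{G}_0$ lie in the open right half-plane.

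Your route buys self-containment with elementary tools. It also extends with no extra work to the stronger statement $\rho(\mathcal{F}\mathcal{A})<1$. Suppose $\mathcal{F}\mathcal{A}x=\lambda x$ with $|\lambda|\ge 1$. The same chain of inequalities at a maximizing index forces $g_i=0$ and $|\lambda|=1$, and the same propagation then gives the contradiction.

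One correction to your closing remark. ``Irreducibly diagonally dominant'' is not the right classical tool here. Irreducibility of $\mathcal{L}+\mathcal{G}_0$ would require $\mathcal{G}$ to be strongly connected, and Condition~\ref{ass:spanningtree} does not give that. The applicable notion is weak chained diagonal dominance: every row is linked, through nonzero off-diagonal entries, to a strictly dominant row. Your propagation argument is exactly a proof that such matrices are nonsingular.
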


\begin{lemma}\label{lmm:weirdlemma}
    Let Condition \ref{ass:pcopy} hold. If there exists a pair $(Y_\mathrm{g},Z_\mathrm{g}) $ such that 
\begin{align}\label{eq:weirdequation}
        Z_\mathrm{g}A_{0\mathrm{a}} = \mathrm{diag}(G_{1i})Z_\mathrm{g} + \mathrm{diag}(G_{2i})Y_\mathrm{g}
    \end{align}
 then    $Y_\mathrm{g} = 0$. 
\end{lemma}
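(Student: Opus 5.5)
The block-diagonal structure lets us decouple \eqref{eq:weirdequation}: first into the $N$ agents, and then, through Condition \ref{ass:pcopy}, into the $p$ output channels of each agent. It then suffices to prove a single-channel statement: if $\alpha$ is square with characteristic polynomial equal to the minimal polynomial of $A_0$, $(\alpha,\beta)$ is reachable, and $ZA_0=\alpha Z+\beta y$ for some row vector $y$, then $y=0$.

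\textbf{Decoupling.} Since $A_{0\mathrm{a}}=I_N\otimes A_0$, partition $Z_\mathrm{g}$ into block rows $Z_i$ of height $n_z$ and $Y_\mathrm{g}$ into block rows $Y_i$ of height $p$. Each $Z_i$ and $Y_i$ is further split column-wise into $N$ blocks of width $n_0$. Because $\mathrm{diag}(G_{1i})$ and $\mathrm{diag}(G_{2i})$ are block diagonal, \eqref{eq:weirdequation} is equivalent to the family of equations $Z_{ij}A_0=G_{1i}Z_{ij}+G_{2i}Y_{ij}$. With $G_{1i}=\mathrm{diag}(\alpha_{\ell i})$ and $G_{2i}=\mathrm{diag}(\beta_{\ell i})$, partition $Z_{ij}$ by the blocks of $\alpha_{\ell i}$ and take $y_\ell$ to be the $\ell$th row of $Y_{ij}$. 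This gives $Z_\ell A_0=\alpha_{\ell i}Z_\ell+\beta_{\ell i}y_\ell$ for each $\ell=1,\ldots,p$, which is exactly the single-channel statement.

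\textbf{Single-channel argument.} Let $m(s)$ be the minimal polynomial of $A_0$, of degree $q$. Reachability of $(\alpha,\beta)$ with a single input means $(\alpha,\beta)$ is similar to controllable canonical form. Together with $\det(sI-\alpha)=m(s)$, this gives invertible $T$ such that $T\alpha T^{-1}$ is the companion matrix of $m$ and $T\beta=e_q$. Set $W=TZ$ with rows $w_1,\ldots,w_q$. Reading the equation $WA_0=(T\alpha T^{-1})W+e_qy$ row by row, the first $q-1$ rows give the chain $w_{k+1}=w_kA_0$, so $w_k=w_1A_0^{k-1}$. The last row gives
\begin{equation*}
w_1A_0^{q}=-\sum_{k=0}^{q-1}m_k\,w_1A_0^{k}+y,
\end{equation*}
where $m(s)=s^q+m_{q-1}s^{q-1}+\cdots+m_0$. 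Rearranging, $y=w_1m(A_0)=0$ because $m(A_0)=0$. Alternatively, avoiding canonical forms: apply $m(\cdot)$ to the Sylvester-type relation. A routine induction shows $Z A_0^k=\alpha^kZ+\sum_{j<k}\alpha^{k-1-j}\beta yA_0^{j}$. Multiplying by the coefficients of $m$ and using $m(\alpha)=0$ by Cayley--Hamilton yields $0=\sum_{j}c_j(\alpha)\beta\,yA_0^j$ for certain polynomials $c_j$. Reachability then forces $y=0$, but this route needs more bookkeeping.

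\textbf{Main obstacle.} The only delicate point is the single-input canonical form step: we must use that the characteristic polynomial of $\alpha_{\ell i}$ equals the \emph{minimal} polynomial of $A_0$. This equality is what ensures $m(A_0)=0$ annihilates the last row. Everything else is index bookkeeping. Collecting $y_\ell=0$ over all $\ell$, $i$, $j$ gives $Y_\mathrm{g}=0$.
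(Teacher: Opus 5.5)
Your argument is correct, but it runs in the opposite direction from the paper's. The paper lifts rather than decomposes. It observes that $A_{0\mathrm{a}}=I_N\otimes A_0$ has the same minimal polynomial as $A_0$, so under Condition~\ref{ass:pcopy} the aggregate pair $(\mathrm{diag}(G_{1i}),\mathrm{diag}(G_{2i}))$ is itself an $Np$-copy internal model of $A_{0\mathrm{a}}$. It then applies the standard internal-model lemma (Lemma~7.4 of the cited cooperative-control monograph) to \eqref{eq:weirdequation} as a whole.

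You instead split \eqref{eq:weirdequation} three ways: by agent $i$, by column block $j$ (which replaces the minimal-polynomial observation about $A_{0\mathrm{a}}$), and by channel $\ell$. You then prove the single-input fact directly via the controllable canonical form, arriving at the explicit identity $y=w_1m(A_0)$.

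The paper's route is a two-line reduction to a known result. Yours is self-contained; it is in effect a proof of the cited lemma. It is also more transparent about what is used: only reachability of $(\alpha_{\ell i},\beta_{\ell i})$ and the fact that the characteristic polynomial of $\alpha_{\ell i}$ annihilates $A_0$. So your ``main obstacle'' remark slightly overstates things. Equality with the \emph{minimal} polynomial is what keeps $n_z$ as small as possible, but for $Y_\mathrm{g}=0$ any annihilating polynomial of $A_0$ would do.

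Your sketched Cayley--Hamilton alternative also goes through. The vectors $c_j(\alpha)\beta$ are a unit-triangular recombination of $\beta,\alpha\beta,\ldots,\alpha^{q-1}\beta$, hence linearly independent by reachability. Still, the canonical-form version is the one to write up.
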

\begin{IEEEproof}
   Due to the block-diagonal structure of $A_{0\mathrm{a}}$, where each block on the diagonal is $A_0$, the minimal polynomials for $A_{0\mathrm{a}}$ and $A_0$ are the same. Thus, under Condition \ref{ass:pcopy}, the pair $(\mathrm{diag}(G_{1i}),\mathrm{diag}(G_{2i}))$ incorporates an $Np$-copy internal model of $A_{0\mathrm{a}}$. By Lemma 7.4 in \cite{cai2022cooperative}, the proof is over.
\end{IEEEproof}

Lemma 4.3 in \cite{SarsilmazIJC} has extended Lemma 1.27 of \cite{huang2004nonlinearbook}, which is key for $p$-copy internal model-based control of single systems, to internal model-based distributed control of continuous-time heterogeneous (in dimension) MASs over general directed graphs. Within the context of RCORP, we present the following key lemma, the discrete-time counterpart of the result in \cite{SarsilmazIJC} for the distributed control law \eqref{eq:controllaw}.

\begin{lemma}\label{lmm:keylemma}
    Let Conditions \ref{ass:spanningtree}--\ref{ass:pcopy} hold. 
    %
    For any $\mathrm{diag}(E_i)$, $F$, and for any $\delta \in \Delta$, 
    there exists a unique matrix $X_\mathrm{g}$ that satisfies the LMEs \eqref{eq:clsdloopreg}. 
\end{lemma}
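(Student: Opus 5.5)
Fix arbitrary $\mathrm{diag}(E_i)$, $F$, and $\delta\in\Delta$. We treat the first equation of \eqref{eq:clsdloopreg} as a Sylvester equation in $X_\mathrm{g}$ and then prove the second equation with the internal model. Since $\delta\in\Delta$, $\Bar{A}_\mathrm{g}$ is Schur, so $|\lambda|<1$ for every $\lambda\in\mathrm{spec}(\Bar{A}_\mathrm{g})$. By Condition \ref{ass:A0antiSchur}, every eigenvalue of $A_{0\mathrm{a}}=I_N\otimes A_0$ has modulus at least $1$. Hence $\mathrm{spec}(\Bar{A}_\mathrm{g})\cap\mathrm{spec}(A_{0\mathrm{a}})=\emptyset$. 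The standard Sylvester-equation result then gives a unique $X_\mathrm{g}$ with $X_\mathrm{g}A_{0\mathrm{a}}=\Bar{A}_\mathrm{g}X_\mathrm{g}+B_\mathrm{g}$. Uniqueness in the lemma follows at once: any solution of \eqref{eq:clsdloopreg} solves this first equation. It remains to show that this particular $X_\mathrm{g}$ also satisfies $\Bar{C}_\mathrm{g}X_\mathrm{g}+D_\mathrm{g}=0$.

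For that, partition $X_\mathrm{g}=[X^\mathrm{T},Z_\mathrm{g}^\mathrm{T}]^\mathrm{T}$ to match $x_\mathrm{g}=[x^\mathrm{T},z^\mathrm{T}]^\mathrm{T}$. Write out the lower block row of the Sylvester equation using the structure of $\Bar{A}$, $\Bar{B}$, $K$, and $B_\mathrm{g}$ in \eqref{eq:gainK}. The controller rows of $\Bar{A}_\mathrm{g}X_\mathrm{g}$ equal
\[
\mathrm{diag}(G_{2i})\mathcal{W}\big(\mathrm{diag}(\Bar{C}_i+\Bar{D}_iK_{1i})X+\mathrm{diag}(\Bar{D}_iK_{2i})Z_\mathrm{g}\big)+\mathrm{diag}(G_{1i})Z_\mathrm{g},
\]
and the lower block of $B_\mathrm{g}$ is $-\mathrm{diag}(G_{2i})\mathcal{W}F_\mathrm{a}$. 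The lower block equation therefore reads
\[
Z_\mathrm{g}A_{0\mathrm{a}}=\mathrm{diag}(G_{1i})Z_\mathrm{g}+\mathrm{diag}(G_{2i})\mathcal{W}\big(\Bar{C}_\mathrm{g}X_\mathrm{g}+D_\mathrm{g}\big),
\]
where we used $D_\mathrm{g}=-F_\mathrm{a}$. This is exactly \eqref{eq:weirdequation} with $Y_\mathrm{g}=\mathcal{W}(\Bar{C}_\mathrm{g}X_\mathrm{g}+D_\mathrm{g})$.

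Condition \ref{ass:pcopy} holds, so Lemma \ref{lmm:weirdlemma} gives $\mathcal{W}(\Bar{C}_\mathrm{g}X_\mathrm{g}+D_\mathrm{g})=0$. By Condition \ref{ass:spanningtree} and Lemma \ref{lmm:Wnonsingular}, $\mathcal{W}$ is nonsingular, so $\Bar{C}_\mathrm{g}X_\mathrm{g}+D_\mathrm{g}=0$. Both equations of \eqref{eq:clsdloopreg} hold, and existence is proved.

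The main obstacle is the block bookkeeping in the second step. One has to check carefully that the $\Bar{D}_i$ terms from $\Bar{B}K$ combine with the $\Bar{C}_i$ terms from $\Bar{A}$ into exactly $\Bar{C}_\mathrm{g}$, and that the exosystem term $-\mathrm{diag}(G_{2i})\mathcal{W}F_\mathrm{a}$ supplies $D_\mathrm{g}$. This is what lets $\mathcal{W}$ be factored out on the left, so that Lemma \ref{lmm:weirdlemma} applies. The eigenvalue-separation step is routine.
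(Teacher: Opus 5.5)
Your proposal is correct and follows the paper's proof essentially step for step. Both get a unique Sylvester solution from the spectral separation of $\Bar{A}_\mathrm{g}$ and $A_{0\mathrm{a}}$, rewrite the controller block row in the form of \eqref{eq:weirdequation} with $Y_\mathrm{g}=\mathcal{W}(\Bar{C}_\mathrm{g}X_\mathrm{g}+D_\mathrm{g})$, and then apply Lemma~\ref{lmm:weirdlemma} together with the nonsingularity of $\mathcal{W}$.
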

\begin{IEEEproof}
 Fix $\mathrm{diag}(E_i)$, $F$, and let $\delta \in \Delta$. Then   $\Bar{A}_\mathrm{g}$ is Schur.   
  Due to the  structure of $A_{0\mathrm{a}}$, $\mathrm{spec}(A_{0\mathrm{a}}) = \mathrm{spec}(A_0)$. 
  Under Condition \ref{ass:A0antiSchur}, $A_{0\mathrm{a}}$ and $\Bar{A}_\mathrm{g}$ have no eigenvalues in common.  Thus, 
the Sylvester equation (i.e., the first LME) in \eqref{eq:clsdloopreg}  has a unique solution $X_\mathrm{g}$ by Proposition A.2 in \cite{huang2004nonlinearbook}.
In addition, we show that the matrix $X_\mathrm{g}$ satisfies the second LME in \eqref{eq:clsdloopreg}. To this end,  partition $X_\mathrm{g}$ as $X_\mathrm{g} = [X^\mathrm{T} \ Z^\mathrm{T}]^\mathrm{T}$, where $X \in \mathbb{R}^{\Bar{n}\times Nn_0}$, $Z \in \mathbb{R}^{Nn_z \times Nn_0}$, and $\Bar{n} = \sum_{i=1}^{N} n_i$. The Sylvester equation can now be expanded as  
\begin{align}\label{eq:sylvesterexpanded_1}
        XA_{0\mathrm{a}} &= \mathrm{diag}(\Bar{A}_i+\Bar{B}_iK_{1i})X + \mathrm{diag}(\Bar{B}_iK_{2i})Z + \mathrm{diag}(E_i) \nonumber \\
        ZA_{0\mathrm{a}} &= \mathrm{diag}(G_{1i})Z + \mathrm{diag}(G_{2i})\mathcal{W}Y
    \end{align}
where 
\begin{align}
         Y = \Bar{C}_\mathrm{g}X_\mathrm{g} + D_\mathrm{g}. \nonumber 
    \end{align}
Observe that the second equation in \eqref{eq:sylvesterexpanded_1} is in the form of \eqref{eq:weirdequation}. Under Condition \ref{ass:pcopy}, we infer from Lemma \ref{lmm:weirdlemma} that $\mathcal{W}Y = 0$. 
Under Condition \ref{ass:spanningtree}, by Lemma \ref{lmm:Wnonsingular}, $\mathcal{W}$ is nonsingular. Therefore, $\mathcal{W}Y = 0$ implies  $Y = 0$. This completes the proof. 
\end{IEEEproof}




The following theorem, which is the discrete-time version of 
Theorem 4.1 in \cite{SarsilmazIJC}, provides sufficient conditions for the solvability of the RCORP.






\begin{theorem}\label{thm:keytheorem}
  Let Conditions \ref{ass:spanningtree}--\ref{ass:pcopy} hold. If $A_\mathrm{g}$ is Schur, then Problem \ref{prb:mainproblem} is solved.
\end{theorem}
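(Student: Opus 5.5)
The plan is to obtain Theorem \ref{thm:keytheorem} by combining Lemma \ref{lmm:fundamentallemma} with Lemma \ref{lmm:keylemma}. Lemma \ref{lmm:fundamentallemma} needs two things: (a) $A_\mathrm{g}$ is Schur, and (b) for every $\mathrm{diag}(E_i)$, every $F$ and every $\delta \in \Delta = \{\delta ~|~ \Bar{A}_\mathrm{g} \text{ is Schur}\}$, the LMEs \eqref{eq:clsdloopreg} have a solution $X_\mathrm{g}$. Requirement (a) is exactly the hypothesis of the theorem. Requirement (b) is exactly what Lemma \ref{lmm:keylemma} gives under Conditions \ref{ass:spanningtree}--\ref{ass:pcopy}, and it even gives uniqueness, which we do not need here.

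In order, the argument would run as follows. Fix a control law \eqref{eq:controllaw} whose pair $(G_{1i},G_{2i})$ satisfies Condition \ref{ass:pcopy} and whose gain $K$ makes $A_\mathrm{g}=A+BK$ Schur. By Remark \ref{rmk:ExistenceofDelta}, the set $\Delta$ then contains $0$ in its interior, so the robustness requirement is not vacuous; in particular $\delta=0$ lies in $\Delta$. Now take arbitrary $\mathrm{diag}(E_i)$, $F$ and $\delta\in\Delta$. Lemma \ref{lmm:keylemma} produces $X_\mathrm{g}$ satisfying \eqref{eq:clsdloopreg}. With both hypotheses of Lemma \ref{lmm:fundamentallemma} verified, that lemma shows Problem \ref{prb:mainproblem} is solved: property (i) holds because $A_\mathrm{g}$ is Schur, and property (ii) holds with the same $\Delta$.

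There is no substantial obstacle. The one thing to check is that the set $\Delta$ in Lemma \ref{lmm:fundamentallemma} and Lemma \ref{lmm:keylemma} is the same set as the $\Delta$ in Problem \ref{prb:mainproblem}(ii); both are defined as the set of $\delta$ for which $\Bar{A}_\mathrm{g}$ is Schur, so they coincide. If one wanted a self-contained argument instead of citing Lemma \ref{lmm:fundamentallemma}, the underlying mechanism is simple. Set $\tilde{x}=x_\mathrm{g}-X_\mathrm{g}v_\mathrm{a}$. The first LME gives $\tilde{x}^{+}=\Bar{A}_\mathrm{g}\tilde{x}$, so $\tilde{x}\to0$ because $\Bar{A}_\mathrm{g}$ is Schur. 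The second LME gives $e=\Bar{C}_\mathrm{g}\tilde{x}$, so $e\to0$.
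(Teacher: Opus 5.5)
Your proposal is correct and matches the paper's proof: the paper also gets the theorem by feeding the existence of $X_\mathrm{g}$ from Lemma \ref{lmm:keylemma} into Lemma \ref{lmm:fundamentallemma}, with the Schur property of $A_\mathrm{g}$ supplying the other hypothesis. Your optional self-contained check via $\tilde{x}=x_\mathrm{g}-X_\mathrm{g}v_\mathrm{a}$ is also sound.
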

\begin{IEEEproof}
   The proof follows  from Lemmas \ref{lmm:fundamentallemma} and \ref{lmm:keylemma}. 
\end{IEEEproof}

\begin{remark}\label{rmk:sufficient_necessary}
The first equation in \eqref{eq:controllaw}, called the distributed $p$-copy internal model of $A_0$ under Condition \ref{ass:pcopy}  (e.g., see Chapter 9 in \cite{cai2022cooperative}), leads to the augmented pair $(A, B)$. Theorem \ref{thm:keytheorem}  tells us that under Conditions \ref{ass:spanningtree}--\ref{ass:pcopy}, to solve Problem \ref{prb:mainproblem}, it is sufficient to solve the following stabilization problem: \textit{Design the structured $K$ in \eqref{eq:gainK} such that $A_{\mathrm{g}} = A+BK$ is Schur.}
Although the stabilizability of the pair $(A,B)$ is necessary for such a $K$ to exist, as Example \ref{ex_structured_stabilizability} shows,  it is not sufficient due to the zero entries of $K$.

\end{remark}


Under mild conditions, the following lemma establishes the stabilizability of the pair $(A,B)$. Compared to the single system case (e.g., see Lemma 1.37 in \cite{huang2004nonlinearbook}), the graph connectivity condition, Condition \ref{ass:spanningtree}, is the only additional one. 


\begin{lemma}\label{lmm:augsysstab}
    Under Conditions \ref{ass:spanningtree} and \ref{ass:pcopy}--\ref{ass:AiBistab}, the pair $(A,B)$ is stabilizable.
\end{lemma}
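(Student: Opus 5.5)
We use the PBH test: $(A,B)$ is stabilizable if and only if $\mathrm{rank}\,[A-\lambda I,\ B]$ equals the number of rows of $A$ for every $\lambda\in\mathbb{C}$ with $|\lambda|\ge 1$. Let $\bar n=\sum_i n_i$. The nominal matrices are
\begin{align}
[A-\lambda I,\ B]=\begin{bmatrix}
\mathrm{diag}(A_i-\lambda I_{n_i}) & 0 & \mathrm{diag}(B_i)\\
\mathrm{diag}(G_{2i})\mathcal{W}\mathrm{diag}(C_i) & \mathrm{diag}(G_{1i})-\lambda I & \mathrm{diag}(G_{2i})\mathcal{W}\mathrm{diag}(D_i)
\end{bmatrix}. \nonumber
\end{align}
We factor this matrix as the product
\begin{align}
\begin{bmatrix} I_{\bar n} & 0 & 0\\ 0 & \mathrm{diag}(G_{2i})\mathcal{W} & \mathrm{diag}(G_{1i})-\lambda I\end{bmatrix}
\begin{bmatrix}
\mathrm{diag}(A_i-\lambda I_{n_i}) & 0 & \mathrm{diag}(B_i)\\
\mathrm{diag}(C_i) & 0 & \mathrm{diag}(D_i)\\
0 & I & 0
\end{bmatrix}, \nonumber
\end{align}
after a suitable column permutation. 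This is the standard factorization from the single-system proof (Lemma 1.37 in \cite{huang2004nonlinearbook}), with $\mathcal{W}$ inserted. By Sylvester's rank inequality, the product has full row rank whenever the left factor has full row rank and the right factor has full column rank.

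\textbf{Right factor.} Up to row permutation, it is block diagonal with blocks $\begin{bmatrix}A_i-\lambda I & B_i\\ C_i & D_i\end{bmatrix}$ and $I$. Its rank is therefore full whenever each agent block has rank $n_i+p$. Since the right factor has $\bar n+Np+Nn_z$ rows and exactly that many columns after accounting for the $m_i$ columns, we need to be careful here: rank $n_i+p$ per agent block gives row-full rank, which is what is actually needed, because the left factor's inner dimension equals the right factor's row count. So we need $\mathrm{rank}\begin{bmatrix}A_i-\lambda I & B_i\\ C_i & D_i\end{bmatrix}=n_i+p$.

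We split by $\lambda$:
\begin{itemize}
\item If $\lambda\in\mathrm{spec}(A_0)$, this is Condition \ref{ass:transzero}.
\item If $\lambda\notin\mathrm{spec}(A_0)$, the right factor need not have full rank, so we argue differently. Here $\mathrm{diag}(G_{1i})-\lambda I$ is nonsingular, because under Condition \ref{ass:pcopy} $\mathrm{spec}(G_{1i})\subseteq\mathrm{spec}(A_0)$. We then directly row-reduce $[A-\lambda I,\ B]$: the second block row has the invertible block $\mathrm{diag}(G_{1i})-\lambda I$ in the $z$-columns, so the rank is $Nn_z+\mathrm{rank}[\mathrm{diag}(A_i-\lambda I),\ \mathrm{diag}(B_i)]$. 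This is full by Condition \ref{ass:AiBistab} since $|\lambda|\ge1$.
\end{itemize}

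\textbf{Left factor, $\lambda\in\mathrm{spec}(A_0)$.} It suffices that $[\mathrm{diag}(G_{2i})\mathcal{W},\ \mathrm{diag}(G_{1i})-\lambda I]$ has full row rank. Since $\mathcal{W}$ is nonsingular by Lemma \ref{lmm:Wnonsingular} (this is where Condition \ref{ass:spanningtree} enters), this matrix equals
\begin{align}
[\mathrm{diag}(G_{2i}),\ \mathrm{diag}(G_{1i})-\lambda I]\,\mathrm{diag}(\mathcal{W},I), \nonumber
\end{align}
so its rank equals that of $[\mathrm{diag}(G_{2i}),\ \mathrm{diag}(G_{1i})-\lambda I]$. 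This is full row rank because each pair $(G_{1i},G_{2i})$ is block-diagonal with reachable pairs $(\alpha_{\ell i},\beta_{\ell i})$ (Condition \ref{ass:pcopy}), and reachability holds for every $\lambda$.

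The main obstacle is getting the block bookkeeping in the factorization right: column orders, dimensions, and checking that the product indeed reproduces $[A-\lambda I,\ B]$ with $\mathcal{W}$ placed correctly. This is what allows nonsingularity of $\mathcal{W}$ to be absorbed into the left factor. Once the factorization is verified, the rest is a direct rank count.
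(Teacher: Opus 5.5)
Your proposal is correct and follows essentially the paper's own proof. It uses the same PBH case split ($\lambda\notin\mathrm{spec}(A_0)$ versus $\lambda\in\mathrm{spec}(A_0)$), the same factorization $M(\lambda)=M_1(\lambda)M_2(\lambda)$ (which holds exactly, with no column permutation needed), and the same application of Sylvester's inequality. The one difference is minor: you get full row rank of $[\mathrm{diag}(G_{2i})\mathcal{W}\ \ \mathrm{diag}(G_{1i})-\lambda I]$ by right-multiplying by the nonsingular $\mathrm{diag}(\mathcal{W},I_{Nn_z})$, which is slightly more direct than the paper's route through the reachability matrix $\Gamma(I_{Nn_z}\otimes\mathcal{W})$.

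Do remove the initial claim that the right factor needs full column rank; its column count $\bar n+Nn_z+\sum_i m_i$ need not equal its row count. What Sylvester's inequality uses, and what Condition~\ref{ass:transzero} supplies, is full row rank $\bar n+Np+Nn_z$, as you eventually note.
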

\begin{IEEEproof}
Define $M(\lambda) =
[A - \lambda I_{\Bar{n}+Nn_z} \  B ]$, where $\Bar{n} = \sum_{i=1}^{N} n_i$. Let  $\lambda \in \mathbb{C}$ such that $|\lambda| \geq 1$.   According to the PBH test for stabilizability, it suffices to show   
   $\mathrm{rank} \hspace{0.05 cm} M(\lambda) = \Bar{n} + Nn_z$.
We first consider the case that $\lambda \notin  \mathrm{spec}(A_0)$. Under Condition \ref{ass:pcopy},  $\mathrm{spec}(\mathrm{diag}(G_{1i})) = \mathrm{spec}(A_0)$. Therefore, $\mathrm{det}(\mathrm{diag}(G_{1i})-\lambda I_{N n_z}) \neq  0$. Under Condition \ref{ass:AiBistab},  $(\mathrm{diag}(A_i),\mathrm{diag}(B_i))$ is stabilizable. Thus, by the PBH test,  $\mathrm{rank} \hspace{0.05 cm} [
    \mathrm{diag}(A_i) - \lambda I_{\bar n} \ \mathrm{diag}(B_i) ] = \bar n$. One can now conclude that $\mathrm{rank} \hspace{0.05 cm} M(\lambda) = \Bar{n} + Nn_z$. 

It remains to consider the case that $\lambda \in  \mathrm{spec}(A_0)$. We can write $M(\lambda) = M_1(\lambda) M_2(\lambda)$, where
    \begin{align}\nonumber
        M_1(\lambda) &= \begin{bmatrix}
            I_{\Bar{n}} & 0 & 0 \\ 0 & \mathrm{diag}(G_{2i})\mathcal{W} & \mathrm{diag}(G_{1i}) - \lambda I_{Nn_z}
        \end{bmatrix}
\nonumber \\
        M_2(\lambda) &= \begin{bmatrix}
            \mathrm{diag}(A_i)-\lambda I_{\Bar{n}} & 0 & \mathrm{diag}(B_i) \\ \mathrm{diag}(C_{i}) & 0 & \mathrm{diag}(D_i) \\
            0 & I_{Nn_z} & 0
        \end{bmatrix}. \nonumber
    \end{align}  
Under Condition \ref{ass:pcopy}, the pair $(\mathrm{diag}(G_{1i}),\mathrm{diag}(G_{2i}))$ is reachable. By the reachability matrix test,  $\mathrm{rank} \hspace{0.05 cm }\Gamma = Nn_z$, where
\begin{align}\nonumber
    \Gamma = \begin{bmatrix}
        \mathrm{diag}(G_{2i}) & \mathrm{diag}(G_{1i}G_{2i}) & \ldots & \mathrm{diag}(G_{1i}^{Nn_z -1}G_{2i})
    \end{bmatrix}.
\end{align}
Under Condition \ref{ass:spanningtree}, by Lemma \ref{lmm:Wnonsingular}, $\mathcal{W}$ is nonsingular. Then $I_{Nn_z} \otimes \mathcal{W}$ is nonsingular. In conjunction with $\mathrm{rank} \hspace{0.05 cm }\Gamma = Nn_z$, this implies  $\mathrm{rank} \hspace{0.05 cm }\Gamma (I_{Nn_z} \otimes \mathcal{W}) = Nn_z$. Observe that $\Gamma (I_{Nn_z} \otimes \mathcal{W})$ is the reachability matrix of the pair $(\mathrm{diag}(G_{1i}),\mathrm{diag}(G_{2i})\mathcal{W})$. By the reachability matrix test,  the pair $(\mathrm{diag}(G_{1i}),\mathrm{diag}(G_{2i})\mathcal{W})$ is reachable. From the PBH test, $\mathrm{rank} \hspace{0.05 cm} 
    [\mathrm{diag}(G_{1i}) - \lambda I_{Nn_z} \ \mathrm{diag}(G_{2i})\mathcal{W}]
= Nn_z$. It is now clear from the structure of $M_1(\lambda)$ that $\mathrm{rank} \hspace{0.05 cm} M_1(\lambda) = \Bar{n} + Nn_z$. 
Under Condition \ref{ass:transzero}, it is also clear from the structure of $M_2(\lambda)$ that 
$\mathrm{rank} \hspace{0.05 cm} M_2(\lambda) = \Bar{n} + Nn_z + Np$. Due to the size of $M(\lambda)$, $\mathrm{rank} \hspace{0.05  cm} M(\lambda) \leq  \Bar{n} + Nn_z$. But, by Sylvester's inequality,  $\mathrm{rank} \hspace{0.05  cm} M(\lambda) \geq \Bar{n} + Nn_z$
Hence, $\mathrm{rank} \hspace{0.05  cm} M(\lambda) = \Bar{n} + Nn_z$. 
\end{IEEEproof}


\begin{example}\label{ex_structured_stabilizability}
Let  $A_i = 0.5$, $B_i =0$, and $C_i = D_i = 1$ for $i = 1, 2$, and $A_0 = 10$. Consider the adjacency matrix $\mathcal{A}$ with  $ a_{21} = a_{12} =  1$, and the pinning gains $g_{1} = 1$ and $g_{2} = 0$.
Select $G_{1i} = G_{2i} = 10$ for $i = 1, 2$. Note that Conditions \ref{ass:spanningtree}--\ref{ass:AiBistab} hold.
 By Lemma \ref{lmm:augsysstab}, the pair $(A,B)$ is stabilizable. Through  structured $K$ in \eqref{eq:gainK}, the eigenvalues of $A+BK$  are $\lambda_1 = \lambda_2 = 0.5$, $\lambda_{3} = 5(s+q) + 10$, and $\lambda_{4} = 5(s-q) + 10$, where  $s = K_{21} + K_{22}$ and $q = \sqrt{K_{21}^2 + K_{22}^2}$. By Cauchy-Schwarz inequality, one can show that $|s|/\sqrt{2} \leq q $. Suppose for a contradiction that there exist $K_{21}$ and $K_{22}$ such that $A+BK$ is Schur. Then $|\lambda_3| = |5(s+q) + 10| < 1$ and $ |\lambda_4| = |5(s-q) + 10| < 1$. Equivalently,
\begin{align}\nonumber
-2.2 < s+q <-1.8, \ \ -2.2 < s-q <-1.8.    
\end{align}
These two inequalities imply $s \in (-2.2,-1.8)$. Since $q \geq 0$, they also imply
$q\in[0,0.2)$. By considering $|s|/\sqrt{2} \leq q $,  $ 1.8< |s|$,  and $q < 0.2$, we obtain $1.8/ \sqrt{2} < 0.2$, a contradiction. Therefore, there is no structured $K$  that renders $A+BK$ Schur even though the pair $(A,B)$ is stabilizable.




\end{example}

\section{Structured Global Design of $K$} \label{sec:global_control}


 


Under Conditions \ref{ass:spanningtree}--\ref{ass:pcopy}, solving Problem \ref{prb:mainproblem} reduces to fulfillment of its property (i), as highlighted in Remark \ref{rmk:sufficient_necessary}. 
It is well known that  $A_{\mathrm{g}}$ is Schur if, and only if, there exists a $P \succ 0$ satisfying the discrete-time Lyapunov inequality, also referred to as the Stein inequality, $A_\mathrm{g}^\mathrm{T}PA_\mathrm{g} - P \prec 0$.   Since $\mathrm{spec}(A_{\mathrm{g}}) = \mathrm{spec}(A_{\mathrm{g}}^{\mathrm{T}}) $, 
$A_{\mathrm{g}}$ is Schur if, and only if, there exists a $P \succ 0$ satisfying the dual of the discrete time Lyapunov inequality $A_\mathrm{g}PA_\mathrm{g}^{\mathrm{T}} - P \prec 0$. The dual inequality is also called the Lyapunov inequality. 
Given $K$, both inequality constraint functions are convex in $P$. However, none of them are jointly convex in $K$ and $P$. Due to the structure of $K$, unlike control gains without any structure, we cannot directly obtain  
an equivalent LMI through a simple change of variables. 
On the other hand, imposing a structure on $P$ turns the Lyapunov inequality into a structured Lyapunov inequality. This makes the change of variables applicable and yields a convex formulation for the synthesis of the structured control gain $K$. The idea and its consequences are formulated in Lemma \ref{lmm:globaldesign} and Theorem \ref{thm:globaltheorem}, which are the discrete-time counterparts of Lemma 3 and Theorem 1 in \cite{koru2020cooperative}.











\begin{lemma}\label{lmm:globaldesign}
There exist matrices $K$ in the form of \eqref{eq:gainK} and 
\begin{align}\label{eq:Lyap_stability_factor}
P &= \begin{bmatrix}
            \mathrm{diag}(P_{1i}) & \mathrm{diag}(P_{\mathrm{o}i}) \\ \mathrm{diag}(P_{\mathrm{o}i}^{\mathrm{T}}) & \mathrm{diag}(P_{2i})
        \end{bmatrix} \succ 0
\end{align} 
where $P_{1i} \in \mathbb{S}^{n_i}_{++}$, $P_{2i} \in \mathbb{S}^{n_z}_{++}$ necessarily, and $P_{\mathrm{o}i} \in \mathbb{R}^{n_i \times n_z}$ for all $i\in \mathcal{N}$, such that 
    \begin{align}\label{eq:lyapunov}
   A_\mathrm{g}PA_\mathrm{g}^\mathrm{T} - P \prec 0
    \end{align}
if, and only if, there exist matrices 
\begin{align}\label{eq:matricesYandP} \nonumber
Q &= \begin{bmatrix}
            \mathrm{diag}(Q_{1i}) & \mathrm{diag}(Q_{\mathrm{o}i}) \\ \mathrm{diag}(Q_{\mathrm{o}i}^{\mathrm{T}}) & \mathrm{diag}(Q_{2i})
        \end{bmatrix} \succ 0 \\
        Y &= \begin{bmatrix}
            \mathrm{diag}(Y_{1i}) & \mathrm{diag}(Y_{2i})
        \end{bmatrix} 
    \end{align}
    where $Q_{1i} \in \mathbb{S}^{n_i}_{++}$, $Q_{2i} \in \mathbb{S}^{n_z}_{++}$ necessarily,  $Q_{\mathrm{o}i} \in \mathbb{R}^{n_i \times n_z}$ 
    $Y_{1i} \in \mathbb{R}^{m_i \times n_i}$, and $Y_{2i} \in \mathbb{R}^{m_i \times n_z}$ for all $i \in \mathcal{N}$, such that  
\begin{align}\label{eq:schurcomplementmatrix}
        \begin{bmatrix}
            - Q & AQ + BY \\ QA^\mathrm{T}+Y^\mathrm{T}B^\mathrm{T}  & -Q
        \end{bmatrix} \prec 0. 
    \end{align}
\end{lemma}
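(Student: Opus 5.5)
The plan is to use the standard change of variables $Q = P$, $Y = KP$ (equivalently $K = YQ^{-1}$) together with a Schur complement. The only nontrivial part is checking that this change of variables respects the imposed block structures.

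\emph{Structure is preserved.} Write $P = \begin{bmatrix} \mathrm{diag}(P_{1i}) & \mathrm{diag}(P_{\mathrm{o}i}) \\ \mathrm{diag}(P_{\mathrm{o}i}^{\mathrm{T}}) & \mathrm{diag}(P_{2i})\end{bmatrix}$. Let $\Pi$ be the permutation that reorders the state $x_\mathrm{g} = [x^\mathrm{T}, z^\mathrm{T}]^\mathrm{T}$ agent-wise into $[x_1^\mathrm{T}, z_1^\mathrm{T}, \ldots, x_N^\mathrm{T}, z_N^\mathrm{T}]^\mathrm{T}$. Then
\begin{align}
\Pi P \Pi^\mathrm{T} = \mathrm{diag}\left(\begin{bmatrix} P_{1i} & P_{\mathrm{o}i} \\ P_{\mathrm{o}i}^\mathrm{T} & P_{2i}\end{bmatrix}\right). \nonumber
\end{align}
Hence $P \succ 0$ if and only if each $2\times 2$ block $\hat P_i$ is positive definite, which forces $P_{1i}\succ 0$ and $P_{2i}\succ 0$ (the word ``necessarily'' in the statement). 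Moreover $P^{-1}$ has the same structure, since $\Pi P^{-1}\Pi^\mathrm{T} = \mathrm{diag}(\hat P_i^{-1})$. Similarly, a $K$ of the form \eqref{eq:gainK} satisfies $K\Pi^\mathrm{T} = \mathrm{diag}([K_{1i}\ K_{2i}])$, with the row blocks ordered by agent. Therefore
\begin{align}
KP = \mathrm{diag}([K_{1i}\ K_{2i}])\,\mathrm{diag}(\hat P_i)\,\Pi, \nonumber
\end{align}
whose $i$th row block is $[K_{1i}\ K_{2i}]\hat P_i$ redistributed by $\Pi$. This is exactly a matrix $Y = [\mathrm{diag}(Y_{1i})\ \mathrm{diag}(Y_{2i})]$ with $[Y_{1i}\ Y_{2i}] = [K_{1i}\ K_{2i}]\hat P_i$. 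Conversely, given a structured $Y$ and a structured $Q\succ0$, the product $K = YQ^{-1}$ has the structure \eqref{eq:gainK}, with $[K_{1i}\ K_{2i}] = [Y_{1i}\ Y_{2i}]\hat Q_i^{-1}$. Verifying this bookkeeping cleanly through the permutation is the main point of care.

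\emph{Schur complement.} With $Q = P \succ 0$ and $Y = KP$, we have $A_\mathrm{g}P = AQ + BY$. Inequality \eqref{eq:lyapunov} reads $(AQ+BY)Q^{-1}(AQ+BY)^\mathrm{T} - Q \prec 0$. Since $-Q \prec 0$, the Schur complement lemma shows this is equivalent to \eqref{eq:schurcomplementmatrix}.

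\emph{Putting it together.} For necessity, start from $(K,P)$ and set $Q=P$, $Y=KP$; by the first step these have the required structure, and by the second step \eqref{eq:schurcomplementmatrix} holds. For sufficiency, start from $(Q,Y)$. The $(1,1)$ block of \eqref{eq:schurcomplementmatrix} gives $Q\succ0$, and with it $Q_{1i},Q_{2i}\succ0$. Set $P=Q$ and $K=YQ^{-1}$, which are structured by the first step, and reverse the Schur complement argument to obtain \eqref{eq:lyapunov}.
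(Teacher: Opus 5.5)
Your proposal is correct and follows the paper's proof: the change of variables $Q=P$, $Y=KP$ (equivalently $K=YQ^{-1}$) combined with the Schur complement of the block $-Q$. The only minor difference is that the paper shows $Q^{-1}$ keeps the structure of $Q$ by citing a block-matrix inversion result (Theorem 2.1 of Lu and Shiou), whereas you derive this, and the structure of $KP$, directly from the agent-wise permutation. That permutation is the same $T$ the paper uses in the remark right after the lemma.
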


\begin{IEEEproof}
For any $Q \succ 0$ and $Y$ of appropriate sizes, the inequality \eqref{eq:schurcomplementmatrix} is equivalent to 
\begin{align}\label{eq:nonconvex_global_form}
(A+BYQ^{-1})Q(A+BYQ^{-1})^{\mathrm{T}} -Q \prec 0
\end{align}
by the Schur complement of $-Q$ in the second diagonal block of the matrix in \eqref{eq:schurcomplementmatrix}. 
To show the ``if'' part, suppose there exist $Q  $ and $Y$ in the form of \eqref{eq:matricesYandP} such that the inequality \eqref{eq:schurcomplementmatrix} holds.  It can be verified by Theorem 2.1 in \cite{LU2002119} that $Q^{-1}$ is in the form of $Q$. Hence, $YQ^{-1}$ satisfies the structure of $K$ in \eqref{eq:gainK}. Since \eqref{eq:schurcomplementmatrix} is equivalent to  \eqref{eq:nonconvex_global_form}, the Lyapunov inequality \eqref{eq:lyapunov} holds with $P = Q$ and $K = YQ^{-1}$. To show the ``only if'' part, assume there exist $K$ in the form of \eqref{eq:gainK} and $P$ in the form of 
\eqref{eq:Lyap_stability_factor} such that the Lyapunov inequality \eqref{eq:lyapunov} holds. Noting  $KP$ satisfies the structure of $Y$ in \eqref{eq:matricesYandP}, the inequality \eqref{eq:nonconvex_global_form} holds with $Q = P$ and $Y =  KP$. The proof is thus completed by the equivalence of \eqref{eq:nonconvex_global_form} and  \eqref{eq:schurcomplementmatrix}. 
\end{IEEEproof}

\begin{remark}\label{remark_stabilizability}
To interpret this result in terms of stabilizability,
the stabilizability of the pair $(A,B)$ through structured $K$ in \eqref{eq:gainK} will be referred to as the \textit{structured stabilizability of the pair $(A,B)$}. The Lyapunov inequality \eqref{eq:lyapunov} with structured $P$ in the form of \eqref{eq:Lyap_stability_factor} will be referred to as the \textit{structured Lyapunov inequality}. 
Though the LMI in Lemma \ref{lmm:globaldesign} does not characterize the structured stabilizability of the pair $(A,B)$ due to the restriction of $P$ to the form in \eqref{eq:Lyap_stability_factor}, it characterizes the  structured stabilizability of the pair $(A,B)$ with respect to the structured Lyapunov inequality.



\end{remark}




\begin{remark}
    With the following permutation matrix 
\begin{align} \label{eq:permutationT}
        T = \begin{bmatrix}
            \mathrm{diag}\left(\begin{bmatrix}
                I_{n_i} \\ 0_{n_{z_i} \times n_i}
            \end{bmatrix}\right) & \mathrm{diag}\left(\begin{bmatrix}
                0_{n_i \times n_{z_i} } \\ I_{n_{z_i}}
            \end{bmatrix}\right)
        \end{bmatrix}
    \end{align}
    where $n_{z_i} = n_z$ for all $i \in \mathcal{N}$, the matrix  $Q$ in the form of \eqref{eq:matricesYandP} is permutation similar to the matrix
   \begin{align}
       \mathrm{diag}\left(\begin{bmatrix}
            Q_{1i} & Q_{\mathrm{o}i} \\ Q_{\mathrm{o}i}^\mathrm{T} &  Q_{2i}
        \end{bmatrix}\right).  \nonumber
    \end{align}
    Since $Q \succ 0 $, this implies that, for any  $i \in \mathcal{N}$, 
    \begin{align}
        \begin{bmatrix}
            Q_{1i} & Q_{\mathrm{o}i} \\ Q_{\mathrm{o}i}^\mathrm{T} &  Q_{2i}
        \end{bmatrix} \succ 0.  \nonumber
    \end{align}
    Therefore, the matrix inverse in \eqref{eq:individualgainsK} is  well-defined. 
\end{remark}

 \begin{theorem}\label{thm:globaltheorem}
 If there exist  $Q$ and $Y$ in the form of \eqref{eq:matricesYandP} such that \eqref{eq:schurcomplementmatrix} holds, and if 
the control gains $K_{1i}$ and $K_{2i}$ 
are recovered as 
     \begin{align}\label{eq:individualgainsK}
       \begin{bmatrix}
             K_{1i} & K_{2i}
         \end{bmatrix} = \begin{bmatrix}
             Y_{1i} & Y_{2i}
         \end{bmatrix}\begin{bmatrix}
            Q_{1i} & Q_{\mathrm{o}i} \\ Q_{\mathrm{o}i}^\mathrm{T} &  Q_{2i}
        \end{bmatrix}^{-1} 
     \end{align}
    for all $i \in \mathcal{N}$,
    then $A_\mathrm{g}$ is Schur. If, in addition, Conditions \ref{ass:spanningtree}--\ref{ass:pcopy} hold, then Problem \ref{prb:mainproblem} is solved.
 \end{theorem}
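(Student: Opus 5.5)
The plan is to reduce the theorem to the ``if'' direction of Lemma \ref{lmm:globaldesign} together with Theorem \ref{thm:keytheorem}. The one genuinely new ingredient is to show that the agent-wise recovery formula \eqref{eq:individualgainsK} gives exactly the global gain $K = YQ^{-1}$ used in that lemma.

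First, I would take $Q$ and $Y$ of the form \eqref{eq:matricesYandP} satisfying \eqref{eq:schurcomplementmatrix}. Taking the Schur complement of the lower-right block $-Q$, as in the proof of Lemma \ref{lmm:globaldesign}, gives \eqref{eq:nonconvex_global_form}:
\[
(A+BYQ^{-1})Q(A+BYQ^{-1})^{\mathrm{T}}-Q\prec 0 .
\]
So with $P=Q\succ 0$ and $K=YQ^{-1}$, the dual Lyapunov inequality \eqref{eq:lyapunov} holds. Hence $A+BYQ^{-1}$ is Schur.

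The main step is to verify that $YQ^{-1}$ has the structure in \eqref{eq:gainK} and that its $i$th blocks equal the right-hand side of \eqref{eq:individualgainsK}. Rather than cite \cite{LU2002119} abstractly, I would use the permutation matrix $T$ in \eqref{eq:permutationT}. By the preceding remark, $T^{\mathrm{T}}QT = \mathrm{diag}(\hat Q_i)$, where
\[
\hat Q_i = \begin{bmatrix} Q_{1i} & Q_{\mathrm{o}i} \\ Q_{\mathrm{o}i}^{\mathrm{T}} & Q_{2i}\end{bmatrix} \succ 0 ,
\]
and $YT = \mathrm{diag}([Y_{1i}\ Y_{2i}])$. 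Here the column ordering and the orthogonality of $T$ need care: $T$ is square of size $\bar n + Nn_z$ with orthonormal columns. Since $T$ is orthogonal, $Q^{-1} = T\,\mathrm{diag}(\hat Q_i^{-1})\,T^{\mathrm{T}}$, and therefore
\[
YQ^{-1} = (YT)\,\mathrm{diag}(\hat Q_i^{-1})\,T^{\mathrm{T}} = \mathrm{diag}\big([Y_{1i}\ Y_{2i}]\hat Q_i^{-1}\big)\,T^{\mathrm{T}} .
\]
Right-multiplying by $T^{\mathrm{T}}$ undoes the interleaving. This gives exactly $[\mathrm{diag}(K_{1i})\ \ \mathrm{diag}(K_{2i})]$ with $K_{1i}, K_{2i}$ as in \eqref{eq:individualgainsK}. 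So the recovered gains assemble into $K=YQ^{-1}$, and $A_\mathrm{g}=A+BK$ is Schur.

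Finally, under Conditions \ref{ass:spanningtree}--\ref{ass:pcopy}, Theorem \ref{thm:keytheorem} applies to this Schur $A_\mathrm{g}$ and gives the solvability of Problem \ref{prb:mainproblem}. I expect only the bookkeeping with $T$ to need attention; everything else is a direct citation of earlier results.
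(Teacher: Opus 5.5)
Your proof is correct in substance and follows the paper's route. You identify the recovered gains with $K=YQ^{-1}$, apply the Schur-complement step from the ``if'' part of Lemma~\ref{lmm:globaldesign} with $P=Q$, and conclude with Theorem~\ref{thm:keytheorem}.

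The paper checks that identification in the opposite direction, which is simpler. Multiplying out \eqref{eq:individualgainsK} gives $Y_{1i}=K_{1i}Q_{1i}+K_{2i}Q_{\mathrm{o}i}^{\mathrm{T}}$ and $Y_{2i}=K_{1i}Q_{\mathrm{o}i}+K_{2i}Q_{2i}$. This means $Y=KQ$ for the $K$ assembled from the recovered blocks, and that $K$ has the structure \eqref{eq:gainK} by construction. So the paper needs neither $Q^{-1}$ nor $T$. Your route, in exchange, reproves that $Q^{-1}$ inherits the sparsity pattern of $Q$.

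Your permutation bookkeeping, however, has $T$ and $T^{\mathrm{T}}$ swapped. The paper's $T$ in \eqref{eq:permutationT} maps the stacked ordering $[x^{\mathrm{T}}\ z^{\mathrm{T}}]^{\mathrm{T}}$ to the interleaved ordering $[x_1^{\mathrm{T}}\ z_1^{\mathrm{T}}\ \cdots\ x_N^{\mathrm{T}}\ z_N^{\mathrm{T}}]^{\mathrm{T}}$. The correct identities are therefore
$TQT^{\mathrm{T}}=\mathrm{diag}(\hat Q_i)$,
$YT^{\mathrm{T}}=\mathrm{diag}([Y_{1i}\ Y_{2i}])$,
$Q^{-1}=T^{\mathrm{T}}\mathrm{diag}(\hat Q_i^{-1})T$, and
$YQ^{-1}=\mathrm{diag}([Y_{1i}\ Y_{2i}]\hat Q_i^{-1})\,T$.
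It is right-multiplication by $T$, not $T^{\mathrm{T}}$, that undoes the interleaving. This is consistent with $P=T^{\mathrm{T}}\mathrm{diag}(P_i)T$ in the proof of Theorem~\ref{thm:agentwiseLocal_Cyclic}. Since $T\neq T^{\mathrm{T}}$ in general (already for $N=3$, $n_i=n_z=1$), your displays are false as written. The argument goes through verbatim once you replace $T$ by $T^{\mathrm{T}}$ throughout.
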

\begin{IEEEproof}
 By \eqref{eq:individualgainsK}, $Y_{1i} = K_{1i}Q_{1i} + K_{2i} Q_{\mathrm{o}i}^{\mathrm{T}}$ and $Y_{2i} = K_{1i}Q_{\mathrm{o}i} + K_{2i} Q_{2i}$ for all $i\in \mathcal{N}$. This implies that $Y = KQ$, 
  where $K$ satisfies the structure in \eqref{eq:gainK}.  From the proof of the ``if'' part of  Lemma \ref{lmm:globaldesign}, the Lyapunov inequality \eqref{eq:lyapunov} holds with $P = Q$ and $K$. This proves the first statement of the theorem. The second statement follows from Theorem \ref{thm:keytheorem}. 
\end{IEEEproof}

\begin{remark}
    Through the structured Lyapunov inequality, 
    the control gains $K_{1i}$ and $K_{2i}$ of each follower are recovered in \eqref{eq:individualgainsK} if  the LMI in Lemma \ref{lmm:globaldesign} is feasible. 
This convex semidefinite feasibility program can be reliably solved using fast solvers with guaranteed polynomial-time convergence \cite{lmibook, cvxbook, nesterov2018}. 
\end{remark}

\section{Agent-Wise Local Design of $K$}\label{sec:agent_wise_local_control}

With the proposed structured global design in Section \ref{sec:global_control}, the synthesis of the control gains $K_{1i}$ and $K_{2i}$ for all $i\in \mathcal{N}$ becomes a convex program, and hence, computationally tractable. However, the synthesis is centralized (i.e., an authority must know the nominal parameters of all followers), even though the implementation of the control law \eqref{eq:controllaw} remains distributed. 
To design each follower's  control gains $K_{1i}$ and $K_{2i}$ independently of the other followers, this section focuses on the individual nominal dynamics of each follower.  








Considering \eqref{eq:MAS}--\eqref{eq:controllaw} and letting $v(0) = 0$, we arrive at the exosystem-free nominal local dynamics of followers  given by
\begin{align} \nonumber
    \xi_i^{+} &= A_{\mathrm{f}i}\xi_i + B_{\mathrm{f}i}\mu_i \\
    e_i &= C_{\mathrm{f}i}\xi_i, \quad i = 1,\ldots,N \nonumber
\end{align}
with $\xi_i = [
    x_i^\mathrm{T} \ z_i^\mathrm{T}
]^\mathrm{T}$, $\mu_i =(d_i+g_i)^{-1}\sum_{j=1}^{N} a_{ij}e_j$, and the following matrices
\begin{align}\nonumber
A_{\mathrm{f}i} &= A_{\mathrm{o}i}+B_{\mathrm{o}i}K_i, \ K_i = \begin{bmatrix}
        K_{1i} & K_{2i}
    \end{bmatrix}  \nonumber \\
    A_{\mathrm{o}i} &= \begin{bmatrix}
        A_i & 0 \\ G_{2i}C_i & G_{1i}
    \end{bmatrix}, \ B_{\mathrm{o}i} = \begin{bmatrix}
        B_i \\ G_{2i}D_i
    \end{bmatrix}, \ B_{\mathrm{f}i} = \begin{bmatrix}
        0 \\ -G_{2i}
    \end{bmatrix} \nonumber  \\ 
    C_{\mathrm{f}i} &= C_{\mathrm{o}i}+D_{i}K_i, \ 
    C_{\mathrm{o}i} = \begin{bmatrix}
        C_i & 0 
    \end{bmatrix}. 
    \nonumber 
\end{align}

Since $|\mathcal{E}'|<N$ and $d_i + g_i >0 $ for all $i\in \mathcal{N}$, we have $\mathrm{rank}(\mathcal{F}\mathcal{A}) > 0$. Hence,  
$\mathcal{F}\mathcal{A}$ has at least one nonzero singular value. 
 Let $\sigma_{\mathrm{min}}(\mathcal{F}\mathcal{A})$ and $\sigma_{\mathrm{max}}(\mathcal{F}\mathcal{A})$ denote the minimum \textit{nonzero} singular value  and maximum singular value of $\mathcal{F}\mathcal{A}$.
The following theorem now derives agent-wise local sufficient conditions for the matrix $A_\mathrm{g}$ to be Schur. 


\begin{theorem}\label{thm:agentwiseLocal_Cyclic} If, for any $i \in \mathcal{N}$, there exist $K_i$, $P_i \succ 0$, and
$r_i \geq \sigma_{\mathrm{max}}^3(\mathcal{F}\mathcal{A})/\sigma_\mathrm{min}(\mathcal{F}\mathcal{A}) $ such that 
\begin{align}\label{eq:agentwise_local_inequalities}
 &  \begin{bmatrix} 
A_{\mathrm{f}i}P_iA_{\mathrm{f}i}^\mathrm{T} \hspace{-0.02cm}-\hspace{-0.02cm} P_i \hspace{-0.02cm}+\hspace{-0.02cm} 
r_iB_{\mathrm{f}i}(I_p \hspace{-0.02cm}+\hspace{-0.02cm}  C_{\mathrm{f}i}P_iC_{\mathrm{f}i}^\mathrm{T})B_{\mathrm{f}i}^\mathrm{T}
& A_{\mathrm{f}i}P_iC_{\mathrm{f}i}^\mathrm{T} \\ C_{\mathrm{f}i}P_iA_{\mathrm{f}i}^\mathrm{T} & -I_p
    \end{bmatrix} 
    \prec 0  \nonumber \\
 & \hspace{0.18 cm}  \sigma_\mathrm{min}(\mathcal{F}\mathcal{A})I_p       \preceq C_{\mathrm{f}i}P_iC_{\mathrm{f}i}^\mathrm{T} \preceq \sigma_\mathrm{max}(\mathcal{F}\mathcal{A})I_p 
\end{align}
then  $A_\mathrm{g}$ is Schur. If, in addition, Conditions \ref{ass:spanningtree}--\ref{ass:pcopy} hold, then Problem \ref{prb:mainproblem} is solved.
\end{theorem}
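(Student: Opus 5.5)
The plan is to rewrite $A_\mathrm{g}$, up to the permutation similarity induced by $T$ in \eqref{eq:permutationT}, as a block-diagonal matrix of the local closed-loop matrices $A_{\mathrm{f}i}$ plus a low-rank interconnection term. Then I certify the dual Lyapunov inequality with the block-diagonal $P=\mathrm{diag}(P_i)$.

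\emph{Reformulation.} From \eqref{eq:virtualerror}, $e_{\mathrm{v}i}=e_i-\mu_i$. Hence $z_i^+=G_{1i}z_i+G_{2i}C_{\mathrm{f}i}\xi_i-G_{2i}\mu_i$ with $\mu=(\mathcal{F}\mathcal{A}\otimes I_p)e$. With $\xi=[\xi_1^\mathrm{T},\ldots,\xi_N^\mathrm{T}]^\mathrm{T}=T^\mathrm{T}x_\mathrm{g}$, this gives
\begin{align}\nonumber
T^\mathrm{T}A_\mathrm{g}T = A_\mathrm{f}+B_\mathrm{f}MC_\mathrm{f},
\end{align}
where $A_\mathrm{f}=\mathrm{diag}(A_{\mathrm{f}i})$, $B_\mathrm{f}=\mathrm{diag}(B_{\mathrm{f}i})$, $C_\mathrm{f}=\mathrm{diag}(C_{\mathrm{f}i})$, and $M=\mathcal{F}\mathcal{A}\otimes I_p$. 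Note $\|M\|=\sigma_{\max}(\mathcal{F}\mathcal{A})=:\bar\sigma$, and write $\underline\sigma=\sigma_{\min}(\mathcal{F}\mathcal{A})$. It then suffices to show $(A_\mathrm{f}+B_\mathrm{f}MC_\mathrm{f})P(A_\mathrm{f}+B_\mathrm{f}MC_\mathrm{f})^\mathrm{T}-P\prec0$ for $P=\mathrm{diag}(P_i)\succ0$.

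\emph{Bounding the coupling.} Expanding gives four groups: $A_\mathrm{f}PA_\mathrm{f}^\mathrm{T}-P$, a cross term, and a quadratic term.
\begin{itemize}
\item For the cross term $A_\mathrm{f}PC_\mathrm{f}^\mathrm{T}M^\mathrm{T}B_\mathrm{f}^\mathrm{T}+(\cdot)^\mathrm{T}$, Young's inequality gives the bound $A_\mathrm{f}PC_\mathrm{f}^\mathrm{T}C_\mathrm{f}PA_\mathrm{f}^\mathrm{T}+B_\mathrm{f}MM^\mathrm{T}B_\mathrm{f}^\mathrm{T}$. This is at most $A_\mathrm{f}PC_\mathrm{f}^\mathrm{T}C_\mathrm{f}PA_\mathrm{f}^\mathrm{T}+\bar\sigma^2B_\mathrm{f}B_\mathrm{f}^\mathrm{T}$.
\item For the quadratic term $B_\mathrm{f}MC_\mathrm{f}PC_\mathrm{f}^\mathrm{T}M^\mathrm{T}B_\mathrm{f}^\mathrm{T}$, use that $C_\mathrm{f}PC_\mathrm{f}^\mathrm{T}=\mathrm{diag}(C_{\mathrm{f}i}P_iC_{\mathrm{f}i}^\mathrm{T})\preceq\bar\sigma I$ by the upper bound in \eqref{eq:agentwise_local_inequalities}. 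This gives the bound $\bar\sigma B_\mathrm{f}MM^\mathrm{T}B_\mathrm{f}^\mathrm{T}\preceq\bar\sigma^3B_\mathrm{f}B_\mathrm{f}^\mathrm{T}$.
\end{itemize}

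\emph{Returning to agent-wise terms.} The key step is to convert these uniform scalar bounds back into block-diagonal, agent-wise terms. The lower bound $C_{\mathrm{f}i}P_iC_{\mathrm{f}i}^\mathrm{T}\succeq\underline\sigma I_p$ gives $\bar\sigma^3 I\preceq(\bar\sigma^3/\underline\sigma)\,\mathrm{diag}(C_{\mathrm{f}i}P_iC_{\mathrm{f}i}^\mathrm{T})$. Since $\bar\sigma\ge\underline\sigma$, we also have $\bar\sigma^2\le\bar\sigma^3/\underline\sigma\le r_i$.

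Hence the whole expression is bounded above by the block-diagonal matrix whose $i$th block is
\begin{align}\nonumber
A_{\mathrm{f}i}P_iA_{\mathrm{f}i}^\mathrm{T}-P_i+A_{\mathrm{f}i}P_iC_{\mathrm{f}i}^\mathrm{T}C_{\mathrm{f}i}P_iA_{\mathrm{f}i}^\mathrm{T}+r_iB_{\mathrm{f}i}(I_p+C_{\mathrm{f}i}P_iC_{\mathrm{f}i}^\mathrm{T})B_{\mathrm{f}i}^\mathrm{T}.
\end{align}
By the Schur complement of $-I_p$ in the first inequality of \eqref{eq:agentwise_local_inequalities}, each such block is negative definite.

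\emph{Conclusion and main difficulty.} Thus $A_\mathrm{g}(TPT^\mathrm{T})A_\mathrm{g}^\mathrm{T}-TPT^\mathrm{T}\prec0$ with $TPT^\mathrm{T}\succ0$, so $A_\mathrm{g}$ is Schur. The second claim then follows from Theorem \ref{thm:keytheorem}. I expect the main obstacle to be the reformulation step: verifying carefully that the $\mathcal{W}$-coupling splits exactly into the local $G_{2i}C_{\mathrm{f}i}$ part plus $-B_\mathrm{f}(\mathcal{F}\mathcal{A}\otimes I_p)C_\mathrm{f}$, including the $D_iK_i$ contributions. After that, the rest is the Young/Schur bookkeeping above, which needs care only in ordering the scalar bounds so that they match the threshold $\bar\sigma^3/\underline\sigma$ on $r_i$.
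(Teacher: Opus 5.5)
Your proof is correct and follows essentially the same route as the paper. You permute to local coordinates, certify with $\mathrm{diag}(P_i)$, and bound the cross term by Young's inequality with the same split. You then bound the quadratic term through the two-sided bounds on $C_{\mathrm{f}i}P_iC_{\mathrm{f}i}^\mathrm{T}$ together with $r_i\ge\sigma_{\max}^3(\mathcal{F}\mathcal{A})/\sigma_{\min}(\mathcal{F}\mathcal{A})$, using $\bar\sigma,\underline\sigma$ directly where the paper passes through $\lambda_{\max}/\lambda_{\min}$ of $C_\mathrm{g}PC_\mathrm{g}^\mathrm{T}$, and close with the Schur complement.

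One small slip: with the paper's $T$ in \eqref{eq:permutationT}, one has $\xi = Tx_\mathrm{g}$. The correct identities are therefore $TA_\mathrm{g}T^\mathrm{T} = A_\mathrm{f}+B_\mathrm{f}(\mathcal{F}\mathcal{A}\otimes I_p)C_\mathrm{f}$, with certificate $T^\mathrm{T}\mathrm{diag}(P_i)T$ as in the paper. This only swaps $T$ and $T^\mathrm{T}$ and does not affect your argument.
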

\begin{IEEEproof}
For any $K_i$ and  symmetric $P_i$ of appropriate sizes, and for any $r_i \in \mathbb{R}$,  the inequalities \eqref{eq:agentwise_local_inequalities} are equivalent to 
\begin{align}\label{eq:ARElike}
M_i  \triangleq \hspace{0.1 cm}& A_{\mathrm{f}i}P_iA_{\mathrm{f}i}^\mathrm{T} - P_i + A_{\mathrm{f}i}P_iC_{\mathrm{f}i}^\mathrm{T}C_{\mathrm{f}i}P_iA_{\mathrm{f}i}^\mathrm{T} + r_iB_{\mathrm{f}i}B_{\mathrm{f}i}^\mathrm{T} \nonumber \\ &+ r_iB_{\mathrm{f}i}C_{\mathrm{f}i}P_iC_{\mathrm{f}i}^\mathrm{T}B_{\mathrm{f}i}^\mathrm{T} \prec 0 \nonumber  \\
& \sigma_\mathrm{min}(\mathcal{F}\mathcal{A}) \leq  \lambda_{\mathrm{min}}(C_{\mathrm{f}i}P_iC_{\mathrm{f}i}^\mathrm{T}) \nonumber  \\
& \lambda_{\mathrm{max}}(C_{\mathrm{f}i}P_iC_{\mathrm{f}i}^\mathrm{T}) \leq \sigma_\mathrm{max}(\mathcal{F}\mathcal{A})
\end{align}
by the Schur complement of $-I_p$ in the matrix of the first inequality in \eqref{eq:agentwise_local_inequalities} and the symmetry of $C_{\mathrm{f}i}P_iC_{\mathrm{f}i}^\mathrm{T}$. 
Thus, to prove the lemma, for any $i\in \mathcal{N}$,  let $K_i$, $P_i\succ 0$, and $r_i \geq \sigma_{\mathrm{max}}^3(\mathcal{F}\mathcal{A})/\sigma_\mathrm{min}(\mathcal{F}\mathcal{A})$ be such that the inequalities \eqref{eq:ARElike} hold. 

For any $i \in \mathcal{N}$, partition $P_i$ as 
\begin{align}\label{eq:small_P_partition}
        P_{i} = \begin{bmatrix}
            P_{1i} & P_{\mathrm{o}i} \\  P_{\mathrm{o}i}^\mathrm{T} &  P_{2i}
        \end{bmatrix} 
    \end{align}
where $\ P_{1i} \in \mathbb{S}^{n_i}_{++}$, $P_{2i} \in \mathbb{S}^{n_z}_{++}$ necessarily, and $P_{\mathrm{o}i} \in \mathbb{R}^{n_i \times n_z}$. 
With the permutation matrix  \eqref{eq:permutationT}, define $P = T^\mathrm{T}\mathrm{diag}(P_{i})T $. Note that $T^{\mathrm{T}} = T^{-1}$.
Therefore, $P$ is permutation similar to $\mathrm{diag}(P_{i})$. Since $P_i \succ 0$ for all $i \in \mathcal{N}$, this ensures that $P$ is in the form of \eqref{eq:Lyap_stability_factor}. To show that the Lyapunov inequality \eqref{eq:lyapunov} holds, we similarly
 define $M = T^\mathrm{T}\mathrm{diag}(M_{i})T$.
In conjunction with the first two lines of \eqref{eq:ARElike}, this yields
\begin{align}\label{eq:M_compact}
           M = \hspace{0.1 cm}&  \hat{A}P\hat{A}^\mathrm{T} - P + \hat{A}PC_\mathrm{g}^\mathrm{T}C_\mathrm{g}P\hat{A}^\mathrm{T} + \hat{B}(R \otimes I_p)\hat{B}^\mathrm{T}  \nonumber \\ &+ \hat{B}(R^{1/2} \otimes I_p)C_\mathrm{g}PC_\mathrm{g}^\mathrm{T}(R^{1/2} \otimes I_p)\hat{B}^\mathrm{T} \prec 0 \end{align}
where 
\begin{align}
        \hat{A} &= \begin{bmatrix}
            \mathrm{diag}(A_i + B_iK_{1i}) & \mathrm{diag}(B_iK_{2i}) \\ \mathrm{diag}(G_{2i}(C_{i}+D_iK_{1i})) & \mathrm{diag}(G_{1i} + G_{2i}D_iK_{2i})
        \end{bmatrix} \nonumber \\
        \hat{B} &= \begin{bmatrix}
            0 \\ -\mathrm{diag}(G_{2i})
        \end{bmatrix}, \ R = \mathrm{diag}(r_i). \nonumber
    \end{align}
    
 Note that $C_{\mathrm{g}} =  \mathrm{diag}(C_{\mathrm{f}i})T $. It can be easily verified that $C_{\mathrm{g}} P C_{\mathrm{g}}^{\mathrm{T}}  = \mathrm{diag}(C_{\mathrm{f}i}P_i C_{\mathrm{f}i}^{\mathrm{T}})$. Hence,  
   \begin{align}\nonumber
        \mathrm{spec}(C_\mathrm{g}PC_\mathrm{g}^\mathrm{T}) = \bigcup_{i=1}^{N} \mathrm{spec}(C_{\mathrm{f}i}P_iC_{\mathrm{f}i}^\mathrm{T}).
    \end{align}
    Since  $\sigma_\mathrm{min}(\mathcal{F}\mathcal{A}) \leq \lambda_{\mathrm{min}}(C_{\mathrm{f}i}P_iC_{\mathrm{f}i}^\mathrm{T})$  and $\lambda_{\mathrm{max}}(C_{\mathrm{f}i}P_iC_{\mathrm{f}i}^\mathrm{T}) \leq \sigma_\mathrm{max}(\mathcal{F}\mathcal{A})$ for all $i \in \mathcal{N}$, this, together with the symmetry of $C_{\mathrm{g}} P C_{\mathrm{g}}^{\mathrm{T}} $, implies
    \begin{align}
    \sigma_\mathrm{min}(\mathcal{F}\mathcal{A}) \leq \lambda_{\mathrm{min}}(C_{\mathrm{g}}PC_{\mathrm{g}}^\mathrm{T}) \leq 
        \lambda_{\mathrm{max}}(C_{\mathrm{g}}PC_{\mathrm{g}}^\mathrm{T}) \leq \sigma_\mathrm{max}(\mathcal{F}\mathcal{A}). \nonumber   
    \end{align}
From these inequalities, we have 
\begin{align}
    \frac{\lambda_{\mathrm{max}}(C_{\mathrm{g}}PC_{\mathrm{g}}^\mathrm{T})}{\lambda_{\mathrm{min}}(C_{\mathrm{g}}PC_{\mathrm{g}}^\mathrm{T})} \leq \frac{\sigma_\mathrm{max}(\mathcal{F}\mathcal{A})}{\sigma_\mathrm{min}(\mathcal{F}\mathcal{A})} \nonumber
\end{align}
which yields 
\begin{align}\label{eq:inequality_FA_CgPCgT}
\sigma_\mathrm{max}^2(\mathcal{F}\mathcal{A})  \frac{\lambda_{\mathrm{max}}(C_{\mathrm{g}}PC_{\mathrm{g}}^\mathrm{T})}{\lambda_{\mathrm{min}}(C_{\mathrm{g}}PC_{\mathrm{g}}^\mathrm{T})} \leq r_i, \quad i= 1,\ldots, N
\end{align}
because  $r_i \geq \sigma_{\mathrm{max}}^3(\mathcal{F}\mathcal{A})/\sigma_\mathrm{min}(\mathcal{F}\mathcal{A})$ for all $i\in \mathcal{N}$. 
Due to the symmetry of $C_{\mathrm{g}}P C_{\mathrm{g}}^{\mathrm{T}} $, we have 
\begin{align}\label{eq:direct_consequence_of_symmetry}
    \lambda_{\mathrm{min}}(C_\mathrm{g}PC_\mathrm{g}^\mathrm{T})I_{Np} \preceq  C_\mathrm{g}P C_\mathrm{g}^\mathrm{T}  \preceq \lambda_{\mathrm{max}}(C_\mathrm{g}PC_\mathrm{g}^\mathrm{T}) I_{Np}. 
\end{align}
By the first inequality in \eqref{eq:direct_consequence_of_symmetry}, we obtain  
\begin{align}
   &   \lambda_{\mathrm{min}}(C_\mathrm{g}PC_\mathrm{g}^\mathrm{T}) (R \otimes I_p) \preceq  (R^{1/2} \otimes I_p) C_\mathrm{g}P C_\mathrm{g}^\mathrm{T} (R^{1/2} \otimes I_p). \nonumber 
\end{align}
 Through \eqref{eq:inequality_FA_CgPCgT}, this yields 
\begin{align}\label{eq:singular_eigenvalue_CgPCgT}
 & \sigma_\mathrm{max}^2(\mathcal{F}\mathcal{A})\lambda_{\mathrm{max}}(C_{\mathrm{g}}PC_{\mathrm{g}}^\mathrm{T})I_{Np} \preceq \nonumber \\ 
&   (R^{1/2} \otimes I_p) C_\mathrm{g}P C_\mathrm{g}^\mathrm{T} (R^{1/2} \otimes I_p).
\end{align}
 From the second inequality in \eqref{eq:direct_consequence_of_symmetry}, we have 
\begin{align}\label{eq:FA_CgPCgT_max_eigenv}
 &   (\mathcal{F}\mathcal{A} \otimes I_p)C_\mathrm{g}P C_\mathrm{g}^\mathrm{T} (\mathcal{F}\mathcal{A} \otimes I_p)^{\mathrm{T}}  \preceq  \nonumber \\ & \lambda_{\mathrm{max}}(C_\mathrm{g}PC_\mathrm{g}^\mathrm{T})(\mathcal{F}\mathcal{A} \mathcal{A}^\mathrm{T} \mathcal{F}^{\mathrm{T}} \otimes I_p). 
\end{align}
Moreover, we have the following set equalities
\begin{align}\label{eq:spec_singular_values}
     \mathrm{spec} (\mathcal{F}\mathcal{A} \mathcal{A}^{\mathrm{T}}\mathcal{F}^{\mathrm{T}} \otimes I_p)  &=   \mathrm{spec} (\mathcal{F}\mathcal{A} \mathcal{A}^\mathrm{T} \mathcal{F}^{\mathrm{T}})  \nonumber \\  & =  \mathrm{spec} (\mathcal{A}^\mathrm{T} \mathcal{F}^{\mathrm{T}}\mathcal{F}\mathcal{A}) 
\end{align}
where the first equality follows from Proposition 7.1.10 in \cite{bernstein11} and the second  holds since $\mathcal{F}\mathcal{A}$ is a square matrix. 
In conjunction with the symmetry of $\mathcal{F}\mathcal{A} \mathcal{A}^{\mathrm{T}}\mathcal{F}^{\mathrm{T}} \otimes I_p$, 
\eqref{eq:spec_singular_values} implies 
\begin{align}\label{eq:FA_max_sing_square}
    \mathcal{F}\mathcal{A} \mathcal{A}^\mathrm{T} \mathcal{F}^{\mathrm{T}} \otimes I_p \preceq  \sigma_\mathrm{max}^2(\mathcal{F}\mathcal{A}) I_{Np}. 
\end{align}

We now infer from \eqref{eq:singular_eigenvalue_CgPCgT}, \eqref{eq:FA_CgPCgT_max_eigenv}, and \eqref{eq:FA_max_sing_square} that 
  \begin{align}\label{eq:FA_CgPCgT}
      &  (\mathcal{F}\mathcal{A} \otimes I_p)C_\mathrm{g}PC_\mathrm{g}^\mathrm{T}(\mathcal{F}\mathcal{A} \otimes I_p)^\mathrm{T} \preceq  \nonumber \\ 
      &  (R^{1/2} \otimes I_p)C_\mathrm{g}PC_\mathrm{g}^\mathrm{T}(R^{1/2} \otimes I_p).   
    \end{align}
    Since $r_i \geq \sigma_{\mathrm{max}}^3(\mathcal{F}\mathcal{A})/\sigma_\mathrm{min}(\mathcal{F}\mathcal{A})$ for all $i\in \mathcal{N}$, one can derive from \eqref{eq:FA_max_sing_square}   that  
    \begin{align}\label{eq:FA_R}
    \mathcal{F}\mathcal{A} \mathcal{A}^\mathrm{T} \mathcal{F}^{\mathrm{T}} \otimes I_p \preceq  R \otimes I_p.
\end{align}
By Lemma 6.2 in \cite{Gu2003}, 
for any real matrices $U$ and $V$ of appropriate sizes, $UV + V^\mathrm{T}U^{\mathrm{T}} \preceq UU^\mathrm{T} + V^\mathrm{T}V$. 
Letting $U = \hat B (\mathcal{F}\mathcal{A} \otimes I_p)$ and $V = C_\mathrm{g}P\hat{A}^\mathrm{T} $ gives 
\begin{align}\label{eq:useful_inequality}
    & \hat B (\mathcal{F}\mathcal{A} \otimes I_p)  C_\mathrm{g}P\hat{A}^\mathrm{T} +  \hat{A} P C_\mathrm{g}^{\mathrm{T}} (\mathcal{F}\mathcal{A} \otimes I_p)^{\mathrm{T}}  \hat B^{\mathrm{T}}    \preceq  \nonumber  \\
    &  \hat B (\mathcal{F}\mathcal{A} \mathcal{A}^\mathrm{T} \mathcal{F}^{\mathrm{T}} \otimes I_p) \hat B^{\mathrm{T}} + \hat{A} P C_\mathrm{g}^{\mathrm{T}} C_\mathrm{g}P\hat{A}^\mathrm{T}.
\end{align}
Utilizing \eqref{eq:FA_CgPCgT}--\eqref{eq:useful_inequality}, we obtain from \eqref{eq:M_compact} that 
\begin{align}
 &   \hat{A}P\hat{A}^\mathrm{T}  +  \hat B (\mathcal{F}\mathcal{A} \otimes I_p)  C_\mathrm{g}P\hat{A}^\mathrm{T} + \hat{A} P C_\mathrm{g}^{\mathrm{T}} (\mathcal{F}\mathcal{A} \otimes I_p)^{\mathrm{T}}  \hat B^{\mathrm{T}} \nonumber \\
 &+ \hat B(\mathcal{F}\mathcal{A} \otimes I_p)C_\mathrm{g}PC_\mathrm{g}^\mathrm{T}(\mathcal{F}\mathcal{A} \otimes I_p)^\mathrm{T} \hat B^{\mathrm{T}} - P \prec 0  \nonumber 
\end{align}
which can be equivalently written as  
\begin{align}
(\hat{A}+\hat{B}(\mathcal{F}\mathcal{A} \otimes I_p)C_\mathrm{g})P(\hat{A}+\hat{B}(\mathcal{F}\mathcal{A} \otimes I_p)C_\mathrm{g})^{\mathrm{T}} -P \prec 0. \nonumber 
\end{align}
Noting  $A_\mathrm{g} = \hat{A} + \hat{B}(\mathcal{F}\mathcal{A} \otimes I_p)C_\mathrm{g}$, this establishes that $A_{\mathrm{g}}$ is Schur. The second statement follows from Theorem \ref{thm:keytheorem}. 
\end{IEEEproof}

\begin{remark}\label{remark_convex_nonconvex_local}
Given $r_i \geq  \sigma_{\mathrm{max}}^3(\mathcal{F}\mathcal{A})/\sigma_\mathrm{min}(\mathcal{F}\mathcal{A})  $ and $K_i$, the inequalities \eqref{eq:agentwise_local_inequalities} are LMIs in $P_i \succ 0$. 
However, the feasible set becomes nonconvex when $K_i$ is treated as a decision variable.







\end{remark}


The following corollary presents a convexification that yields a convex semidefinite feasibility program for each follower to synthesize $K_i$ when $D_i = 0$.   

\begin{corollary}\label{crl:ControlSynthesisDiszero}
Let $D_i =0 $ and $r_i \geq  \sigma_{\mathrm{max}}^3(\mathcal{F}\mathcal{A})/\sigma_\mathrm{min}(\mathcal{F}\mathcal{A})  $ for all $i\in \mathcal{N}$. If, for any $i \in \mathcal{N}$, there exist  $P_i \succ 0$, $Y_i$, and  $\Theta_i$ such that 
 \begin{align}\label{eq:convex_general_D_LMI}
       &\begin{bmatrix}
            \Theta_i & Y_i \\ Y_i^{\mathrm{T}} & P_i
        \end{bmatrix} \succeq 0 \nonumber  \\
      &  \begin{bmatrix}
            \Omega_i  & (A_{\mathrm{o}i}P_i + B_{\mathrm{o}i}Y_i)C_{\mathrm{o}i}^{\mathrm{T}}  \\   C_{\mathrm{o}i}(P_iA_{\mathrm{o}i}^{\mathrm{T}} + Y_i^{\mathrm{T}}B_{\mathrm{o}i}^{\mathrm{T}}) &  -I_{p}
        \end{bmatrix} \prec 0  \nonumber \\
        & \hspace{0.18 cm} \sigma_\mathrm{min}(\mathcal{F}\mathcal{A})I_p \preceq  C_{\mathrm{o}i}P_i C_{\mathrm{o}i}^{\mathrm{T}} \preceq \sigma_\mathrm{max}(\mathcal{F}\mathcal{A})I_p      
    \end{align}
where $\Theta_i \succeq 0$ necessarily and 
    \begin{align}
        \Omega_i = \hspace{0.1 cm}& A_{\mathrm{o}i}P_iA_{\mathrm{o}i}^{\mathrm{T}}  +B_{\mathrm{o}i}Y_iA_{\mathrm{o}i}^{\mathrm{T}} + A_{\mathrm{o}i}Y_i^{\mathrm{T}}B_{\mathrm{o}i}^{\mathrm{T}}+B_{\mathrm{o}i}\Theta_iB_{\mathrm{o}i}^{\mathrm{T}} - P_i \nonumber \\
        &+ r_iB_{\mathrm{f}i}B_{\mathrm{f}i}^\mathrm{T}     
    +r_i B_{\mathrm{f}i}C_{\mathrm{o}i}P_iC_{\mathrm{o}i}^{\mathrm{T}}B_{\mathrm{f}i}^{\mathrm{T}} \nonumber 
    \end{align}
and if  the control gain $K_i$ is recovered as $K_i = Y_iP_i^{-1}$ for all $i \in \mathcal{N}$, then $A_\mathrm{g}$ is Schur. If, in addition, Conditions \ref{ass:spanningtree}--\ref{ass:pcopy} hold, then  Problem \ref{prb:mainproblem} is solved.
\end{corollary}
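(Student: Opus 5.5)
The plan is to reduce Corollary \ref{crl:ControlSynthesisDiszero} to Theorem \ref{thm:agentwiseLocal_Cyclic}. Fix $i \in \mathcal{N}$ and let $P_i \succ 0$, $Y_i$, $\Theta_i$ satisfy \eqref{eq:convex_general_D_LMI}. Set $K_i = Y_iP_i^{-1}$, so $Y_i = K_iP_i$. The goal is to show that this $K_i$, together with the same $P_i$ and $r_i$, satisfies \eqref{eq:agentwise_local_inequalities}. Once that holds for every $i$, Theorem \ref{thm:agentwiseLocal_Cyclic} gives that $A_\mathrm{g}$ is Schur, and under Conditions \ref{ass:spanningtree}--\ref{ass:pcopy} Problem \ref{prb:mainproblem} is solved.

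First I use $D_i = 0$. Then $C_{\mathrm{f}i} = C_{\mathrm{o}i} + D_iK_i = C_{\mathrm{o}i}$, so the output terms contain no decision-variable product. Hence the third line of \eqref{eq:convex_general_D_LMI} is exactly the second line of \eqref{eq:agentwise_local_inequalities}. Also, $A_{\mathrm{f}i}P_i C_{\mathrm{f}i}^{\mathrm{T}} = (A_{\mathrm{o}i}P_i + B_{\mathrm{o}i}Y_i)C_{\mathrm{o}i}^{\mathrm{T}}$, which matches the off-diagonal block of the second LMI. The term $r_iB_{\mathrm{f}i}(I_p + C_{\mathrm{f}i}P_iC_{\mathrm{f}i}^{\mathrm{T}})B_{\mathrm{f}i}^{\mathrm{T}}$ equals the last two summands of $\Omega_i$.

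It remains to compare the $(1,1)$ blocks. Expanding with $Y_i = K_iP_i$ gives
\begin{align}
A_{\mathrm{f}i}P_iA_{\mathrm{f}i}^{\mathrm{T}} = A_{\mathrm{o}i}P_iA_{\mathrm{o}i}^{\mathrm{T}} + B_{\mathrm{o}i}Y_iA_{\mathrm{o}i}^{\mathrm{T}} + A_{\mathrm{o}i}Y_i^{\mathrm{T}}B_{\mathrm{o}i}^{\mathrm{T}} + B_{\mathrm{o}i}Y_iP_i^{-1}Y_i^{\mathrm{T}}B_{\mathrm{o}i}^{\mathrm{T}}. \nonumber
\end{align}
This is the key step. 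Since $P_i \succ 0$, the Schur complement of $P_i$ in the first LMI of \eqref{eq:convex_general_D_LMI} gives $\Theta_i \succeq Y_iP_i^{-1}Y_i^{\mathrm{T}}$. Therefore $B_{\mathrm{o}i}Y_iP_i^{-1}Y_i^{\mathrm{T}}B_{\mathrm{o}i}^{\mathrm{T}} \preceq B_{\mathrm{o}i}\Theta_iB_{\mathrm{o}i}^{\mathrm{T}}$. It follows that the matrix in the first inequality of \eqref{eq:agentwise_local_inequalities} equals the matrix in the second LMI of \eqref{eq:convex_general_D_LMI} minus the positive semidefinite matrix $\mathrm{diag}(B_{\mathrm{o}i}(\Theta_i - Y_iP_i^{-1}Y_i^{\mathrm{T}})B_{\mathrm{o}i}^{\mathrm{T}}, 0)$. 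A negative definite matrix minus a positive semidefinite one remains negative definite, so the first inequality of \eqref{eq:agentwise_local_inequalities} holds.

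The only genuine obstacle is this relaxation of the quadratic term $K_iP_iK_i^{\mathrm{T}}$ into the slack variable $\Theta_i$. It requires handling the nonstrict Schur complement correctly, which is fine because $P_i \succ 0$. It also requires checking that every other block matches exactly, which is where $D_i = 0$ is essential: otherwise $C_{\mathrm{f}i}P_iC_{\mathrm{f}i}^{\mathrm{T}}$ would contain $K_i$ and the bounds would be nonconvex. Finally, $\Theta_i \succeq 0$ follows from the first LMI as a diagonal block of a positive semidefinite matrix.
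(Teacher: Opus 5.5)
Your proof is correct and follows essentially the paper's route. Like the paper, you reduce to Theorem \ref{thm:agentwiseLocal_Cyclic} using $Y_i = K_iP_i$, $C_{\mathrm{f}i}=C_{\mathrm{o}i}$, and the Schur-complement bound $K_iP_iK_i^{\mathrm{T}} = Y_iP_i^{-1}Y_i^{\mathrm{T}} \preceq \Theta_i$; the only cosmetic difference is that you compare the block matrices directly, whereas the paper first passes to the Schur-complemented form $M_i \prec 0$ in \eqref{eq:ARElike}.
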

\begin{IEEEproof}
     Fix $i\in \mathcal{N}$. Clearly, it suffices to show that the first inequality in \eqref{eq:ARElike} holds. 
     Since $K_i = Y_iP_i^{-1}$, $Y_i = K_i P_i$. 
     Thus, the Schur complement of $P_i$ in the matrix of the first inequality in \eqref{eq:convex_general_D_LMI} gives $K_i P_i K_i^{\mathrm{T}} \preceq \Theta_i $. Since $D_i = 0$, $C_{\mathrm{f}i} =  C_{\mathrm{o}i} $. Using $C_{\mathrm{f}i} =  C_{\mathrm{o}i} $, $K_i P_i K_i^{\mathrm{T}} \preceq \Theta_i $, and  $Y_i = K_i P_i$, we arrive at the following inequality
     \begin{align}\label{eq:Omega}
        A_{\mathrm{f}i}P_iA_{\mathrm{f}i}^\mathrm{T} - P_i  + r_iB_{\mathrm{f}i}B_{\mathrm{f}i}^\mathrm{T} + r_iB_{\mathrm{f}i}C_{\mathrm{f}i}P_iC_{\mathrm{f}i}^\mathrm{T}B_{\mathrm{f}i}^\mathrm{T} \preceq \Omega_i. 
    \end{align}
Using $C_{\mathrm{f}i} =  C_{\mathrm{o}i} $ and $Y_i = K_i P_i$, from the second inequality in \eqref{eq:convex_general_D_LMI}, we have 
    \begin{align}%
A_{\mathrm{f}i}P_iC_{\mathrm{f}i}^\mathrm{T}C_{\mathrm{f}i}P_iA_{\mathrm{f}i}^\mathrm{T}  + \Omega_i \prec 0. \nonumber
\end{align}
This and \eqref{eq:Omega} now yield the first inequality in \eqref{eq:ARElike}. 
\end{IEEEproof}

As in the continuous-time case (see \cite{NoCyclePaper} and Remark 4.5 in  \cite{SarsilmazIJC}), the next corollary tells that the design of $K_i$ for each follower simplifies to making the nominal local system matrix $A_{\mathrm{f}i}$ Schur if the directed graph $\mathcal{G}$ is acyclic. The existence of such a $K_i$ is ensured under Conditions \ref{ass:pcopy}--\ref{ass:AiBistab} (e.g., see Lemma 1.37 in \cite{huang2004nonlinearbook}), and linear control design techniques, such as eigenvalue assignment and LQR, can be used to find one.


\begin{corollary}\label{crl:acyclic}
 Let the directed graph $\mathcal{G}$  be acyclic.
 Then the following statements are true: 
  \begin{enumerate}
        \item [(i)] The matrix $A_{\mathrm{g}}$ is Schur if, and only if, $A_{\mathrm{f}i}$ is Schur for all $i \in \mathcal{N}$.
        \item [(ii)] Let Conditions \ref{ass:A0antiSchur} and \ref{ass:pcopy} hold. If $A_{\mathrm{f}i}$ is Schur for all $i \in \mathcal{N}$, then Problem \ref{prb:mainproblem} is solved. 
    \end{enumerate}
\end{corollary}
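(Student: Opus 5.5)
For part (i), the plan is to show that, after a suitable relabeling of the followers, $A_{\mathrm{g}}$ is permutation similar to a block lower triangular matrix whose $i$th diagonal block is $A_{\mathrm{f}i}$. Since $\mathcal{G}$ is acyclic, its nodes admit a topological ordering. Under such an ordering, with $(j,i)\in\mathcal{E}$ implying $j$ precedes $i$, the adjacency matrix $\mathcal{A}$ is strictly lower triangular. Relabeling followers is a permutation similarity $\Pi\otimes I$ applied consistently to $\mathcal{A}$, $\mathcal{F}$, and all block-diagonal matrices. It does not change $\mathrm{spec}(A_{\mathrm{g}})$ and only reorders the set $\{A_{\mathrm{f}i}\}$, so without loss of generality $\mathcal{A}$ is strictly lower triangular. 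Then $\mathcal{F}\mathcal{A}$ is strictly lower triangular as well, because $\mathcal{F}$ is diagonal.

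Next I use the decomposition already exploited in the proof of Theorem \ref{thm:agentwiseLocal_Cyclic}: $A_{\mathrm{g}} = \hat{A} + \hat{B}(\mathcal{F}\mathcal{A}\otimes I_p)C_{\mathrm{g}}$. This identity is purely algebraic, since it just expands $\mathcal{W}=I-\mathcal{F}\mathcal{A}\otimes I_p$. Conjugating with the permutation $T$ in \eqref{eq:permutationT} gives $T A_{\mathrm{g}} T^{\mathrm{T}} = \mathrm{diag}(A_{\mathrm{o}i}+B_{\mathrm{o}i}K_i) + \mathrm{diag}(B_{\mathrm{f}i})(\mathcal{F}\mathcal{A}\otimes I_p)\mathrm{diag}(C_{\mathrm{f}i})$. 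Here $T\hat{A}T^{\mathrm{T}}=\mathrm{diag}(A_{\mathrm{f}i})$, $T\hat{B}=\mathrm{diag}(B_{\mathrm{f}i})$, and $C_{\mathrm{g}}T^{\mathrm{T}}=\mathrm{diag}(C_{\mathrm{f}i})$. The $(i,j)$ block of the second term is $B_{\mathrm{f}i}\,(d_i+g_i)^{-1}a_{ij}\,C_{\mathrm{f}j}$, which vanishes for $j\ge i$. Hence $TA_{\mathrm{g}}T^{\mathrm{T}}$ is block lower triangular with diagonal blocks $A_{\mathrm{f}i}$. This gives $\mathrm{spec}(A_{\mathrm{g}})=\bigcup_i \mathrm{spec}(A_{\mathrm{f}i})$, from which (i) follows in both directions. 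The only care needed is bookkeeping: checking that the permutation of followers commutes with $T$ and with the block-diagonal structures.

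For part (ii), if every $A_{\mathrm{f}i}$ is Schur, then $A_{\mathrm{g}}$ is Schur by (i), so property (i) of Problem \ref{prb:mainproblem} holds. I would like to invoke Theorem \ref{thm:keytheorem}, but Condition \ref{ass:spanningtree} is not assumed. Inspecting the proof chain, Condition \ref{ass:spanningtree} was used only through Lemma \ref{lmm:Wnonsingular}, that is, the nonsingularity of $\mathcal{W}$. The standing assumption $d_i+g_i>0$ together with acyclicity gives this directly: $\mathcal{W}=(I_N-\mathcal{F}\mathcal{A})\otimes I_p$, and $I_N-\mathcal{F}\mathcal{A}$ is unit lower triangular in the topological ordering, so $\det\mathcal{W}=1$.

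I would therefore rerun the proof of Lemma \ref{lmm:keylemma} with this replacement. For $\delta\in\Delta$, Condition \ref{ass:A0antiSchur} yields a unique solution of the Sylvester equation. Lemma \ref{lmm:weirdlemma}, which needs only Condition \ref{ass:pcopy}, gives $\mathcal{W}Y=0$, and nonsingularity of $\mathcal{W}$ then gives $Y=0$. Lemma \ref{lmm:fundamentallemma} then completes (ii). The main obstacle is merely making sure no other step silently relied on the spanning-tree condition; I expect none did.
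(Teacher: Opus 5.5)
Your proof is correct. Part (i) is the paper's argument: the decomposition $A_{\mathrm{g}}=\hat A+\hat B(\mathcal{F}\mathcal{A}\otimes I_p)C_{\mathrm{g}}$, conjugation by $T$, and a topological relabeling that makes $\mathcal{F}\mathcal{A}$ strictly lower triangular. One small nit: the $n_i$ differ, so on the stacked $\xi_i$ the relabeling is a block permutation with unequal block sizes. It is literally $\Pi\otimes I_p$ only on the $Np$-dimensional output side, which is harmless.

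Part (ii) is where you diverge. The paper reopens nothing. Its first footnote records that, when $\mathcal{G}$ is acyclic, the standing assumption $d_i+g_i>0$ already forces Condition \ref{ass:spanningtree}. The reason is that following in-edges backward from any follower must, by acyclicity, stop at a node with $d_k=0$, hence $g_k>0$. So every follower is reachable from the leader, and Theorem \ref{thm:keytheorem} applies verbatim together with part (i).

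You instead treat Condition \ref{ass:spanningtree} as unavailable and check that it enters the proof chain only through Lemma \ref{lmm:Wnonsingular}. You then replace that lemma by the observation that $I_N-\mathcal{F}\mathcal{A}$ is unit lower triangular in the topological order, so $\det\mathcal{W}=1$, and rerun the proof of Lemma \ref{lmm:keylemma}.

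Your route is self-contained and purely algebraic: it reuses the ordering from part (i) and avoids the cited Lemma \ref{lmm:Wnonsingular}. The paper's route keeps Theorem \ref{thm:keytheorem} a black box, at the price of a small graph-theoretic fact stated only in a footnote. Your bypass gains no generality, though. Under $d_i+g_i>0$, acyclicity implies Condition \ref{ass:spanningtree} anyway, so reopening Lemma \ref{lmm:keylemma} was not needed.
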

\begin{IEEEproof}
The second statement follows from the first footnote,  the first statement of the corollary, and Theorem \ref{thm:keytheorem}. Therefore, it suffices to prove the first statement. We recall from the proof of Theorem \ref{thm:agentwiseLocal_Cyclic} that $A_\mathrm{g} = \hat{A} + \hat{B}(\mathcal{F}\mathcal{A} \otimes I_p)C_\mathrm{g}$. Let $A_{\mathrm{c}} = \mathrm{diag}(A_{\mathrm{f}i})+\mathrm{diag}(B_{\mathrm{f}i})(\mathcal{F}\mathcal{A} \otimes I_p)\mathrm{diag}(C_{\mathrm{f}i})$.
With the permutation matrix  \eqref{eq:permutationT}, we have $\mathrm{diag}(A_{\mathrm{f}i}) = T\hat{A}T^\mathrm{T}$, $\mathrm{diag}(B_{\mathrm{f}i}) = T\hat{B}$, and $\mathrm{diag}(C_{\mathrm{f}i}) = C_\mathrm{g}T^\mathrm{T}$. 
Thus, $TA_\mathrm{g}T^\mathrm{T} = A_{\mathrm{c}}$, which implies that $A_\mathrm{g}$  is similar to $A_{\mathrm{c}}$. Since the directed graph $\mathcal{G}$ is acyclic, one can relabel the nodes in $\mathcal{N}$, and hence the followers, such that $i> j$ if $(j,i) \in \mathcal{E}$. Then, without loss of generality, we assume that the adjacency matrix $\mathcal{A}$ is lower triangular. Since the diagonal entries of $\mathcal{A}$ are zero and $\mathcal{F}$ is a diagonal matrix, $\mathcal{FA}$ is a lower triangular matrix with zero diagonal entries. Consequently, $A_\mathrm{c}$ is a  block lower triangular matrix and its diagonal blocks are $A_{\mathrm{f}1}, \ldots, A_{\mathrm{f}N}$. In conjunction with $A_\mathrm{g}$  being similar to  $A_{\mathrm{c}}$, this implies that $\mathrm{spec}(A_\mathrm{g}) = \mathrm{spec}(\mathrm{diag}(A_{\mathrm{f}i}))$. Hence, the proof is over. 
\end{IEEEproof}

\section{Set Inclusions Among Control Gain Sets Emerging from Global and Local Perspectives}\label{sec_set_inclusions_global_local}
The results in Sections \ref{sec:global_control} and \ref{sec:agent_wise_local_control} yield several sets of control gains that make $A_{\mathrm{g}}$ Schur from both structured global and agent-wise local perspectives. Including the universal set denoted by $K_{\mathrm{G}}$, we investigate the relationships between these sets. To this end, the control gain sets of interest  are given by
\begin{align} \nonumber
K_\mathrm{G} &= \{K ~|~ \eqref{eq:lyapunov}   \text{ is feasible for some }  P \succ 0 \} \\
K_\mathrm{S} &= \{K  ~|~  \eqref{eq:lyapunov}   \text{ is feasible for some }  P   \text{ subject to }  \eqref{eq:Lyap_stability_factor}\} \nonumber \\
K_\mathrm{LA} &= \{K  ~|~  A_{\mathrm{f}i}   \text{ is Schur }  \text{for all } i \in \mathcal{N}\} \nonumber \\
K_{\mathrm{LC}}
&=
\left\{
\begin{array}{c l}\hspace{-0.25cm}
K\hspace{0.05cm}\!\left|\hspace{-0.45cm}\vphantom{
\begin{array}{l}
\eqref{eq:agentwise_local_inequalities}
\text{ is feasible for some }
r_i \ge
\dfrac{\sigma_{\mathrm{max}}^3(\mathcal{F}\mathcal{A})}
      {\sigma_{\mathrm{min}}(\mathcal{F}\mathcal{A})}
\\
\text{and } P_i \succ 0,
\ \forall i \in \mathcal{N}
\end{array}}
\right. &
\begin{array}{l}
\eqref{eq:agentwise_local_inequalities}
\text{ is feasible for some }
r_i \ge
\dfrac{\sigma_{\mathrm{max}}^3(\mathcal{F}\mathcal{A})}
      {\sigma_{\mathrm{min}}(\mathcal{F}\mathcal{A})}
\\
\text{and } P_i \succ 0,
\ \text{for all } i \in \mathcal{N}
\end{array}
\end{array}
\hspace{-0.35cm}\right\}. \nonumber
\end{align}

\begin{corollary}\label{crl:setinclusion1} The set inclusions $K_\mathrm{LC} \subseteq K_\mathrm{S} \subseteq K_\mathrm{G}$ hold.
\end{corollary}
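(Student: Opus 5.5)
The second inclusion $K_\mathrm{S} \subseteq K_\mathrm{G}$ is immediate from the definitions. Any $P$ in the form of \eqref{eq:Lyap_stability_factor} is required there to satisfy $P \succ 0$. So if \eqref{eq:lyapunov} is feasible for some structured $P$, it is feasible for some $P \succ 0$.

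The substantive part is $K_\mathrm{LC} \subseteq K_\mathrm{S}$, and I would obtain it by rereading the proof of Theorem \ref{thm:agentwiseLocal_Cyclic}. Take $K \in K_\mathrm{LC}$. Then for every $i \in \mathcal{N}$ there are $r_i \geq \sigma_{\mathrm{max}}^3(\mathcal{F}\mathcal{A})/\sigma_\mathrm{min}(\mathcal{F}\mathcal{A})$ and $P_i \succ 0$ such that \eqref{eq:agentwise_local_inequalities} holds with $K_i = [K_{1i} \ K_{2i}]$. Here $K$ is understood as having the structure \eqref{eq:gainK}, assembled from the $K_i$.

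\begin{itemize}
\item[(a)] Partition each $P_i$ as in \eqref{eq:small_P_partition} and define $P = T^\mathrm{T}\mathrm{diag}(P_i)T$ using the permutation matrix \eqref{eq:permutationT}.
\item[(b)] Check that this $P$ has exactly the block-diagonal structure \eqref{eq:Lyap_stability_factor}, with blocks $P_{1i}$, $P_{\mathrm{o}i}$, $P_{2i}$. Check also that $P \succ 0$, because $P$ is permutation similar to $\mathrm{diag}(P_i) \succ 0$. Both facts are stated in that proof.
\item[(c)] That proof then shows, using only these $P_i$, $r_i$, and $K_i$, that $A_\mathrm{g}PA_\mathrm{g}^\mathrm{T} - P \prec 0$. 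The chain is \eqref{eq:M_compact}, then the bounds \eqref{eq:FA_CgPCgT} and \eqref{eq:FA_R}, then the inequality \eqref{eq:useful_inequality}, and finally the identity $A_\mathrm{g} = \hat{A} + \hat{B}(\mathcal{F}\mathcal{A}\otimes I_p)C_\mathrm{g}$.
\end{itemize}

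Hence \eqref{eq:lyapunov} is feasible for a $P$ subject to \eqref{eq:Lyap_stability_factor}, so $K \in K_\mathrm{S}$.

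The only point needing care is that Theorem \ref{thm:agentwiseLocal_Cyclic} is stated with the weaker conclusion ``$A_\mathrm{g}$ is Schur''. Quoting it directly would only give $K \in K_\mathrm{G}$. I will therefore explicitly point out that its proof produces a Lyapunov certificate $P$ that is itself structured. I will also confirm that the structured $K$ built from the $K_i$ is the same object appearing in the sets $K_\mathrm{S}$ and $K_\mathrm{G}$.
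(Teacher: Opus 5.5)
Your proposal is correct and matches the paper's proof: $K_\mathrm{S} \subseteq K_\mathrm{G}$ holds by definition, and $K_\mathrm{LC} \subseteq K_\mathrm{S}$ holds because the proof of Theorem~\ref{thm:agentwiseLocal_Cyclic} builds the structured certificate $P = T^\mathrm{T}\mathrm{diag}(P_i)T$ satisfying \eqref{eq:lyapunov}. You were right to note that the theorem's statement alone would only give $K \in K_\mathrm{G}$, so the inclusion must be drawn from its proof.
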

\begin{IEEEproof}
     The inclusion $K_\mathrm{LC} \subseteq K_\mathrm{S}$ follows from the proof of Theorem \ref{thm:agentwiseLocal_Cyclic} and the inclusion $K_\mathrm{S} \subseteq K_\mathrm{G}$ is by definition. 
\end{IEEEproof}

One may also question whether the following inclusions hold:  (i) $K_\mathrm{G} \subseteq K_\mathrm{S}$,  (ii) $K_\mathrm{S} \subseteq K_\mathrm{LC}$, (iii) $K_\mathrm{LA} \subseteq K_\mathrm{G}$, and (iv) $K_\mathrm{G} \subseteq K_\mathrm{LA}$. As shown in Examples \ref{ex:KgdoesnotimplyKs}--\ref{ex:KGdoesnotimplyKa}, the answer is negative for each case.

\begin{example}\label{ex:KgdoesnotimplyKs}
Let  $A_i = 0$ and  $B_i = C_i = D_i = 1$ for $i = 1, 2$, and $A_0 = 2$.  Consider the same adjacency matrix and the pinning gains given in Example \ref{ex_structured_stabilizability}. Select $G_{1i} = 2$ and $G_{2i} = 1$ for $i = 1, 2$. With $K_{11} = 1$, $K_{12} = -0.9$, $K_{21} = -1$, and $K_{22}=-2$, $A_\mathrm{g}$ is Schur, and hence, $K \in K_\mathrm{G}$. Suppose for a contradiction that $K \in K_\mathrm{S}$. Then there exists a structured $P$ in the form of \eqref{eq:Lyap_stability_factor} such that \eqref{eq:lyapunov} holds. Thus,  $P_{1i}>0$ and $P_{2i}>0$ for $i=1,2$,  and 
\begin{align}\label{eq:ineqsinexample}\nonumber
P_{\mathrm{o}1} &> 0.5 P_{21} \nonumber \\
        P_{22} &< 0.0475P_{12}-0.9P_{\mathrm{o}2} \nonumber \\
         P_{22} &< 0.1P_{\mathrm{o}2}-0.0025P_{12}-4P_{11}-4P_{\mathrm{o}1} \nonumber  \\
         P_{22} &> P_{21} + 0.01P_{12}+4P_{11}-4P_{\mathrm{o}1}
     \end{align}
where the inequalities \eqref{eq:ineqsinexample} follow from the diagonal entries of $A_\mathrm{g}PA_\mathrm{g}^\mathrm{T} - P$ being negative. Since $P_{21} >0$, the first inequality in \eqref{eq:ineqsinexample} implies $P_{\mathrm{o}1} >0$. Since $P_{1i} >0$ for $i=1, 2$, $P_{22}>0$, and $P_{\mathrm{o}1} >0$, we infer from the third inequality in  \eqref{eq:ineqsinexample} that $P_{\mathrm{o}2} >0$.
The second inequality in \eqref{eq:ineqsinexample} is equivalent to
\begin{align}\label{eq:second_ineq_Ex2}
    0.0125P_{12}>P_{22}/3.8+0.9P_{\mathrm{o}2}/3.8. 
\end{align}
From the third and fourth inequalities in \eqref{eq:ineqsinexample}, we obtain
\begin{align}\label{eq:third_fourth_inequality_Ex2_implies}
    0.0125P_{12} < 0.1 P_{\mathrm{o}2} - 8P_{11} - P_{21}. 
\end{align}
We now deduce from \eqref{eq:second_ineq_Ex2} and \eqref{eq:third_fourth_inequality_Ex2_implies} that  $-0.52P_{\mathrm{o}2}>P_{22}+30.4P_{11}+3.8P_{21}$. 
In conjunction with $P_{11}> 0$, $P_{2i}>0$ for $i=1,2$, and $P_{\mathrm{o}2} >0$, this implies that $0 > 0$, a contradiction. Hence, $K_{\mathrm{G}} \nsubseteq K_{\mathrm{S}}$.
\end{example}

\begin{example}\label{ex:KsdoesnotimplyKlc}
Let  $A_i = 0.5$, $B_i =0$, and $C_i = D_i = 1$ for $i = 1, 2$, and $A_0 = 1$. Consider the adjacency matrix $\mathcal{A}$ with $a_{21} = 1$, $a_{12} = 0$, and the pinning gains $g_{1} = 1$ and $g_{2} = 0$.
Choose $G_{1i} = G_{2i} = 1$ for $i = 1, 2$. Take $K_{1i} = -1$ for $i=1,2$, $K_{21} = -0.5$, and $K_{22} = -1$. Then \eqref{eq:lyapunov} is feasible for $P = I_4$, and hence, $K\in K_{\mathrm{S}}$.  Suppose for a contradiction that $K \in K_\mathrm{LC}$. Then there exist $r_1  \geq \sigma_{\mathrm{max}}^3(\mathcal{F}\mathcal{A})/\sigma_\mathrm{min}(\mathcal{F}\mathcal{A}) $ and $P_1 \succ 0$, with the partition in \eqref{eq:small_P_partition}, such that the inequalities \eqref{eq:agentwise_local_inequalities} hold. Note that $\sigma_\mathrm{max}(\mathcal{F}\mathcal{A}) = \sigma_\mathrm{min}(\mathcal{F}\mathcal{A}) = 1$. Thus, $r_1 \geq 1$, and the last two inequalities in \eqref{eq:agentwise_local_inequalities} imply that $P_{21}
= 4$. Since the first inequality in \eqref{eq:agentwise_local_inequalities} is equivalent to the first inequality in \eqref{eq:ARElike}, $M_1 \prec 0$. Thus, $2r_1 -2$, which is a diagonal entry of $M_1$, must be negative.  
In conjunction with $r_1 \geq 1$, this implies that $1<1$, a contradiction. Hence, $K_{\mathrm{S}} \nsubseteq K_{\mathrm{LC}}$. 
Considering this and the first inclusion in Corollary \ref{crl:setinclusion1}, we conclude that $K_{\mathrm{LC}}$ is a proper subset of $K_{\mathrm{S}}$ for this example. 
\end{example}

\begin{example}\label{ex:KadoesnotimplyKg}
Consider Example \ref{ex_structured_stabilizability}. Take
$K = -[I_2 \ I_2]$. Then  $A_{\mathrm{f}i}$ is Schur for $i=1,2$, but $A_\mathrm{g}$ is not.
Thus,  $K_{\mathrm{LA}} \nsubseteq K_{\mathrm{G}}  $.
\end{example}

\begin{example}\label{ex:KGdoesnotimplyKa}
Consider Example \ref{ex:KgdoesnotimplyKs}. Even though $K \in K_{\mathrm{G}}$,   $K \notin K_{\mathrm{LA}}$ since $A_{\mathrm{f}1}$ is not Schur. Thus, $K_{\mathrm{G}} \nsubseteq K_{\mathrm{LA}}$. Note also that Conditions \ref{ass:spanningtree}--\ref{ass:pcopy} hold. It is now worth highlighting that, by Theorem  \ref{thm:keytheorem}, Problem \ref{prb:mainproblem} is still solved without $A_{\mathrm{f}1}$ being Schur.
\end{example}




\begin{corollary}\label{crl:setinclusion2}
If  $\mathcal{G}$ is acyclic, then $K_\mathrm{LA} = K_\mathrm{G}$.
\end{corollary}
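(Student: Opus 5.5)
This is essentially a restatement of Corollary \ref{crl:acyclic}(i) in the language of control gain sets, so the plan is to unpack both definitions and invoke that corollary directly.

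First, recall that $K_\mathrm{G}$ is the set of structured gains $K$ (in the form \eqref{eq:gainK}) for which \eqref{eq:lyapunov} admits some $P \succ 0$. By the standard discrete-time Lyapunov characterization quoted at the beginning of Section \ref{sec:global_control}, such a $P$ exists if, and only if, $A_\mathrm{g} = A + BK$ is Schur. Hence
\[
K_\mathrm{G} = \{K \text{ of the form \eqref{eq:gainK}} ~|~ A_\mathrm{g} \text{ is Schur}\}.
\]

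Next, observe that a structured $K$ of the form \eqref{eq:gainK} is in one-to-one correspondence with the collection $K_i = [K_{1i} \ K_{2i}]$, $i \in \mathcal{N}$. The set $K_\mathrm{LA}$ is therefore the set of such $K$ for which every $A_{\mathrm{f}i} = A_{\mathrm{o}i} + B_{\mathrm{o}i}K_i$ is Schur.

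Finally, fix any $K$ of this form. Since $\mathcal{G}$ is acyclic, Corollary \ref{crl:acyclic}(i) states that $A_\mathrm{g}$ is Schur if, and only if, $A_{\mathrm{f}i}$ is Schur for all $i \in \mathcal{N}$. That corollary rests on the similarity $TA_\mathrm{g}T^\mathrm{T} = A_\mathrm{c}$ together with the block lower-triangular structure of $A_\mathrm{c}$ obtained after a topological relabeling of the nodes. Consequently $K \in K_\mathrm{G}$ if and only if $K \in K_\mathrm{LA}$, which gives both inclusions and hence $K_\mathrm{LA} = K_\mathrm{G}$.

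There is no real obstacle. The one point to state explicitly is that the relabeling of nodes in the proof of Corollary \ref{crl:acyclic} only permutes the followers. This changes neither the Schur property of $A_\mathrm{g}$ nor the individual matrices $A_{\mathrm{f}i}$, so the equivalence holds for the original labeling, and both sets range over the same family of structured gains.
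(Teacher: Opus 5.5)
Your proposal is correct and follows the paper's route: the paper's proof is the single line that the result is immediate from Corollary \ref{crl:acyclic}(i). Your unpacking of $K_\mathrm{G}$ via the Lyapunov characterization, and your remark that the topological relabeling is only a permutation similarity, make explicit what the paper leaves implicit.
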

\begin{IEEEproof}
   It is immediate from Corollary \ref{crl:acyclic} (i).
\end{IEEEproof}

Finally, the following example demonstrates the agent-wise local synthesis of a $K\in K_{\mathrm{LC}}$ that solves Problem \ref{prb:mainproblem} for  a heterogeneous MAS by utilizing Corollary \ref{crl:ControlSynthesisDiszero}.

\begin{example}
    Let $A_i = B_i = C_i = 1$, $i=1,3$, 
    \begin{align}\nonumber
    A_i &= \begin{bmatrix}
        1 & 1 \\ 0 & 1
    \end{bmatrix}, \ B_i = \begin{bmatrix}
        0.5 \\ 1
    \end{bmatrix}, \ C_i = \begin{bmatrix}
        1 & 0
    \end{bmatrix}, \ \ i=2,4 \nonumber
\end{align}  
$D_i = 0 $, $i = 1, 2,3, 4$, and $A_0 = 1$. Consider the adjacency matrix $\mathcal{A}$ with $a_{21} = a_{12} = a_{32} = 0.2$, $a_{14} = a_{24} = a_{34} = a_{23} = 0.1$, and the pinning gains $g_1 = 0.5$, $g_4 = 0.1$.   The remaining entries of $\mathcal{A}$ and the rest of the pinning gains are zero. 
Select $G_{1i} = G_{2i} = 1$ for $i = 1, 2, 3 ,4$.  Note that Conditions \ref{ass:spanningtree}--\ref{ass:pcopy} hold. Choose $r_i = 0.92 \geq \sigma_{\mathrm{max}}^3(\mathcal{F}\mathcal{A})/\sigma_\mathrm{min}(\mathcal{F}\mathcal{A}) $ for $i=1,2,3,4$. To solve \eqref{eq:convex_general_D_LMI}, we use CVX 
a package for modeling and solving convex programs \cite{grant2014cvx}, and obtain
\begin{align}
  P_i &= \begin{bmatrix}
            0.7231  &  -1.8112 \\
   -1.8112  &  20.7015
    \end{bmatrix} \nonumber \\ 
    Y_i &=  \begin{bmatrix}
        -0.7377 & -0.0532
    \end{bmatrix}, \ \Theta_i = 1.2077, \ i = 1, 3 \nonumber \\ 
    P_i &= \begin{bmatrix}
        0.7033 &  -0.8646 &   -1.7348 \\
   -0.8646  &  6.2609  &   -0.0550 \\
   -1.7348 &  -0.0550 &   17.0840
    \end{bmatrix} \nonumber \\
   Y_i &= \begin{bmatrix}
         0.5105  & -8.4232 &    0.1100
    \end{bmatrix}, \ \Theta_i = 12.8017, \ i= 2,4. \nonumber 
\end{align}
Each follower's control gain $K_i$ is recovered as suggested in Corollary \ref{crl:ControlSynthesisDiszero}, and is given below 
\begin{align}
    K_i &= \begin{bmatrix}
           -1.3147  & -0.1176
    \end{bmatrix}, \ i = 1,3 \nonumber  \\ 
    K_i &= \begin{bmatrix}
         -1.5978  & -1.5674 &   -0.1609
    \end{bmatrix}, \  i = 2,4. \nonumber
\end{align}
By the proof of Corollary \ref{crl:ControlSynthesisDiszero}, $K \in K_{\mathrm{LC}}$. We also conclude from  Corollary \ref{crl:ControlSynthesisDiszero} that  Problem \ref{prb:mainproblem} is solved. 
\end{example}

Although it provides sufficient conditions for the solvability of the problem for a given $K_i$, Theorem \ref{thm:localsolution} does not provide any guide for control synthesis. The following result shows that a convex relaxation of the agent-wise local sufficient condition leads to an independent control synthesis for each agent.

\section{Conclusion}
We studied the solvability of the RCORP of discrete-time uncertain heterogeneous linear MASs through the distributed internal model approach.  Global and agent-wise local design methods, involving the feasibility of LMIs, were presented. 
The global one is less conservative, whereas the local is more scalable. The solvability result and design methods in this article can serve as a building block for research in data-driven distributed control, for instance, in extending the data informativity approach for robust output regulation \cite{10502160} and the data-driven cooperative output regulation \cite{FATTORE2026112789} to RCORP.

\bibliographystyle{IEEEtran}
\bibliography{refs.bib}

\end{document}